\newtheorem{theorem}{Theorem}[section]
\newtheorem{corollary}{Corollary}
\newtheorem{lemma}[theorem]{Lemma}
\newtheorem{proposition}{Proposition}
\theoremstyle{definition}
\newtheorem*{remark}{Remark}
\newtheorem{example}{Example}
\newcommand{\ord}{\operatorname{ord}}
\newcommand{\Tr}{\operatorname{Tr}}
\title[Ternary Primitive LCD BCH  codes]{Ternary Primitive LCD BCH  codes }
\author[Xinmei Huang, Qin Yue, Yansheng Wu and Xiaoping Shi]{}
\subjclass{Primary: 11T23, 94B05; Secondary: 11L05.}
 \keywords{LCD codes, BCH codes, absolute coset leaders, Kloosterman sums.}
\thanks{This paper was supported by the National Natural Science Foundation of China (No.61772015), the Foundation of Science and Technology on Information Assurance Laboratory (No.KJ-17-010) and the Foundation of Jinling Institute of Technology (No.JIT-B-202016, No.JIT-FHXM-2020). Y. Wu was sponsored by NUPTSF (No. NY220137). (Corresponding author:Yansheng Wu.) }
\begin{document}
\maketitle

\centerline{\scshape Xinmei Huang}
\medskip
{\footnotesize
  \centerline{Department of Mathematics, Jinling Institute of Technology}
   \centerline{Nanjing, 211169, P. R. China}
   \centerline{State Key Laboratory of Cryptology, P. O. Box 5159}\centerline{ Beijing, 100878, China}
}

\medskip

\centerline{\scshape Qin Yue}
\medskip
{\footnotesize
  \centerline{Department of Mathematics, Nanjing University of Aeronautics and Astronautics}
   \centerline{Nanjing, Jiangsu, 211100, China}
  }

\medskip

\centerline{\scshape Yansheng Wu}
\medskip
{\footnotesize
 \centerline{School of Computer Science, Nanjing University of Posts and Telecommunications}
  \centerline{Nanjing 210023, P. R. China}
   }

\medskip

\centerline{\scshape Xiaoping Shi}
\medskip
{\footnotesize
 \centerline{Department of Mathematics, Nanjing Forestry University}
  \centerline{Nanjing 210037, P. R. China}
   }
\bigskip
\centerline{(Communicated by )}

\begin{abstract}
Absolute coset leaders were first proposed by the authors which  have advantages in constructing binary LCD BCH  codes.
As a continue work, in this paper we focus on ternary linear codes.
Firstly, we find the largest, second largest, and third largest absolute coset leaders of ternary primitive BCH codes.
Secondly, we  present three classes of ternary primitive BCH codes and determine their weight distributions.
Finally, we obtain some  LCD BCH codes and  calculate some weight distributions.
However,  the calculation of weight distributions  of two of these codes is equivalent to that of  Kloosterman sums.
\end{abstract}

%The title of your section 1
\section{Introduction}

Let $\Bbb F_q$ be a finite field with $q$ elements, where $q$ is a prime power. An $[n,k,d]$ linear code $\mathcal{C}$  over $\Bbb F_q$ is a linear subspace of $\Bbb F_q^n$ with dimension $k$ and minimum (Hamming) distance $d$. Let $A_i$ denote the number of codewords in $\mathcal C$
 with Hamming weight $i$. The weight enumerator of
 $\mathcal C$ is defined by
 $1+A_1z+A_2z^2+\cdots+A_nz^n.$
The sequence $(1, A_1, A_2, \ldots, A_n)$ is called the weight distribution of
 $\mathcal C$. A code $\mathcal{C}$ is $t$-weight if the number of nonzero $A_{i}$ in the sequence $(A_1, A_2, \ldots, A_n)$ is equal to $t$.

We define the standard Euclidean inner product of the $\Bbb F_q$-vector space $\Bbb F_q^n$ as follows:
for ${\bf a}=(a_0,\ldots, a_{n-1}), {\bf c}=(c_0,\ldots, c_{n-1})$, $\langle {\bf a}, {\bf c}\rangle={\bf a}{\bf c}^T=\sum_{i=0}^{n-1}a_ic_i$.
Let $\mathcal C$ be an $[n,k]$ linear code, its dual code is defined as follows: $$\mathcal{C}^{\perp}=\{{\bf a}\in \Bbb F_q^n: {\bf a}{\bf c}^T=0 \mbox{ for all }{\bf c}\in \mathcal C\}.$$
If the code $\mathcal{C}$ satisfies the condition that each  codeword $(c_0,c_1,\ldots,c_{n-1})\in  \mathcal{C}$
implies $(c_{n-1},c_0,c_1,\ldots,c_{n-2})\in \mathcal{C}$, then $ \mathcal{C}$ is said to be a cyclic code. A cyclic code $\mathcal{C}$ of length $n$ over $\Bbb F_q$  corresponds to an ideal of the quotient ring $\Bbb F_q[x] /\langle x^n-1\rangle$. Furthermore, $\Bbb F_q[x] /\langle x^n-1\rangle $ is a principle ideal ring, and $\mathcal{C}$ is generated by a monic divisor $g(x)$ of ${x^n} - 1$. In this situation, $g(x)$ is called the generator polynomial of the code $\mathcal{C}$ and we write $\mathcal{C} = \langle g(x)\rangle$.

Let $\Bbb Z_n=\{0,1,\ldots,n-1\}$ be the ring of integers modulo $n$. For $s\in \Bbb Z_n$,
 assume that  $l_s$ is the smallest positive integer such that
$q^{l_s}s\equiv s\pmod n$. Then  the $q$-cyclotomic coset of $s$ modulo $n$
is defined by $$C_s=\{s, sq, \cdots, sq^{l_s-1}\}_{\mod n}\subset \Bbb Z_n $$ and $ |C_s|=l_s$.  The  smallest  integer in $C_s$  is called the {\bf  coset leader} of $C_s$ (see \cite{LDL2}). In the paper \cite{HYWS}, the authors gave  a new definition to investigate  LCD BCH codes. Define that the smallest  integer  in the set $\{k,n-k: k\in C_s\}$ is called the {\bf absolute coset leader} of $C_s$.

Let  $m=\ord_n(q)$ be the multiplicative order of $q$ modulo $n$ and  $\gamma $  a primitive element of $\Bbb F_{q^m}$.  Then  $\alpha=\gamma^{\frac{q^m-1}n}$ is of order $n$.
A cyclic code  $\mathcal{C}_{(q,n,\delta,b)}=\langle g(x)\rangle$ of length $n$ over $\Bbb F_q$ is called a BCH code with the designed distance $\delta$ if its generator polynomial is of the form $$g(x)=\prod\limits_{i \in Z} {(x - {\alpha ^i}}),\ Z=C_{b+1}\cup C_{b+2}\cup\cdots \cup C_{b+\delta-1},$$ where $Z$ is called the defining set of $\mathcal C_{(q,n,\delta,b)}$. If $n=q^m-1$, we call $\mathcal{C}_{(q,n,\delta,b)}$ a primitive BCH code. If $b=0$, $\mathcal{C}_{(q,n,\delta,b)}$ is called a narrow-sense BCH code; otherwise,  it is called a non-narrow-sense BCH code. The dimension of $\mathcal C_{(q,n,\delta,b)}$ is
$\dim(\mathcal C_{(q,n,\delta,b)})= n-|\bigcup_{i=b+1}^{b+\delta-1}C_i|$.
Thus, to determine the dimension of the code $\mathcal C_{(q,n,\delta,b)}$, we only need to find out all coset leaders and
  cardinalities of the $q$-cyclotomic cosets.

LCD cyclic codes named  reversible codes were first studied by Massey for data
storage applications \cite{M1}. An application
of LCD codes against side-channel attacks was  investigated by Carlet and Guilley, and
several constructions of LCD codes were presented in \cite{CG}. Several constructions
of LCD MDS codes were presented in \cite{CMTQ,CL,J,L, S}. Tzeng and Hartmann proved
that the minimum distance of a class of LCD cyclic codes is
greater than the BCH bound \cite{TH}. Several investigations of LCD BCH codes were studied in \cite{HYWS,LDL2,LLDL, WY,YLLY}.
Parameters and the weight distributions of BCH codes are studied in  \cite{D,DFZ,LDXG,LDL1,LYL, LWL, LYW}. LCD codes in Hermitian case were studied in \cite{CMTQ,L}. In \cite{CMTQP}, Carlet et al. completly determined all $q$-ary($q>3$) Euclidean LCD codes and all $q^2$-ary ($q>2$) Hermitian LCD codes for all parameter.  Some binary and ternary LCD codes were investigated in \cite{HYWS,ZLTD}. In \cite{HYWS}, the authors proposed a new conception, named absolute coset leader, and constructed some binary LCD BCH codes. In this paper, we shall investigate the ternary case.

The remainder of the paper is organized as follows. In Section 2, some fundamental definitions and results are introduced. In Section 3, the  largest, second largest, and third largest absolute coset leaders are presented for ternary primitive BCH codes.  In Section 4, some BCH codes and their weight distributions are presented. Also, LCD BCH codes are constructed and their parameters are determined,  some weight distributions are calculated, the determination of the others is equivalent to the computing of  Kloosterman sums. In Section 5, we conclude this paper.

\section{Preliminaries}

A linear code $\mathcal{C}$  over $\Bbb F_q$ is called a linear code with complementary
dual code (LCD for short)  if $\mathcal{C}\bigcap \mathcal{C}^{\perp}= \{0\}$, where $\mathcal{C}^{\perp}$ denotes the Euclidean dual of $\mathcal{C}$.

Let $f(x) = x^t +a_{t-1}x^{t-1}+\cdots+a_1x_1+a_0 $ be a monic  polynomial over $\Bbb F_q$ with $a_0 \neq 0$. The reciprocal polynomial  of $f(x)$ is defined by
$\widehat{f}(x)=a_0^{-1}x^tf(x^{-1})$.
Then we have the following lemma that characterizes LCD cyclic codes over  $\Bbb F_q$.

\begin{lemma}\cite{YM} {\rm
Let $\mathcal{C}$ be a cyclic code of length $n$ over $\Bbb F_q$ with generator polynomial $g(x)$ and $\gcd(n,q)=1$.
Then the following statements are equivalent.
\begin{enumerate}
\item $\mathcal{C}$ is an LCD code.
\item $g(x)$ is self-reciprocal, i.e., $g(x) = \widehat{g}(x)$.
\item $\alpha^{-1}$ is a root of $g(x)$ for every root $\alpha$ of $g(x)$.
\end{enumerate}}
\end{lemma}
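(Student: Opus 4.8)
The plan is to route everything through the standard description of the dual of a cyclic code together with the factorization of $x^n-1$. Write $x^n-1=g(x)h(x)$, so that $h(x)$ is the check polynomial of $\mathcal{C}$. Because $\gcd(n,q)=1$, the polynomial $x^n-1$ is squarefree, splitting into distinct monic irreducible factors over $\Bbb F_q$; in particular $g(0)\neq 0$ and $h(0)\neq 0$, so the reciprocals $\widehat g$ and $\widehat h$ are well-defined monic divisors of $x^n-1$ (here one uses that $x^n-1$ is itself self-reciprocal). The two structural facts I would record first are: (i) $\mathcal{C}^{\perp}=\langle \widehat h(x)\rangle$; and (ii) for cyclic codes of the same length one has $\langle a(x)\rangle\cap\langle b(x)\rangle=\langle \operatorname{lcm}(a(x),b(x))\rangle$. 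Combining them, $\mathcal{C}\cap\mathcal{C}^{\perp}=\langle\operatorname{lcm}(g,\widehat h)\rangle$, and since the zero code corresponds to the ideal $\langle x^n-1\rangle$, the code $\mathcal{C}$ is LCD precisely when $\operatorname{lcm}(g,\widehat h)=x^n-1$.

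Next I would convert this into a coprimality condition. Both $g$ and $\widehat h$ divide $x^n-1$, so $\operatorname{lcm}(g,\widehat h)$ always divides $x^n-1$; hence equality holds iff the lcm attains degree $n=\deg g+\deg\widehat h$, i.e. iff $\gcd(g,\widehat h)=1$. Taking reciprocals in $x^n-1=gh$ gives $x^n-1=\widehat g\,\widehat h$, so $\widehat h=(x^n-1)/\widehat g$ and therefore $\gcd(g,\widehat h)=\gcd\bigl(g,(x^n-1)/\widehat g\bigr)$. By squarefreeness, reciprocation acts as an involution $\sigma$ on the set of monic irreducible factors of $x^n-1$; writing $g$ as the product over a subset $S$ of these factors, a brief bookkeeping shows $\gcd\bigl(g,(x^n-1)/\widehat g\bigr)=1$ iff $S$ is fixed by $\sigma$, i.e. iff $\widehat g=g$. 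This establishes $(1)\Leftrightarrow(2)$.

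For $(2)\Leftrightarrow(3)$ I would use that $\widehat g(x)=g(0)^{-1}x^{\deg g}g(x^{-1})$, whose roots are exactly the inverses of the roots of $g$ (legitimate since $g(0)\neq 0$). As $g$ is squarefree it is determined, up to a nonzero scalar, by its set of roots, and being monic it is determined outright; hence $\widehat g=g$ holds iff the set of roots of $g$ coincides with the set of their inverses, which is precisely the condition in $(3)$.

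I expect the only genuine subtlety to lie in the bookkeeping for $(1)\Leftrightarrow(2)$: one must use that $\sigma$ is an \emph{involution} (so that $S\subseteq\sigma(S)$ forces $S=\sigma(S)$) and must keep the monic normalizations straight, so that ``self-reciprocal'' really is the equality $g=\widehat g$ of monic polynomials rather than equality up to a unit. Everything else is a direct application of facts (i) and (ii) together with the squarefreeness of $x^n-1$.
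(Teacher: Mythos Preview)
The paper does not actually supply a proof of this lemma: it is quoted verbatim from \cite{YM} as a known characterization, with no argument given. So there is nothing in the paper to compare your proposal against.

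That said, your argument is correct and is essentially the standard proof. The chain $\mathcal{C}\cap\mathcal{C}^\perp=\langle\operatorname{lcm}(g,\widehat h)\rangle=\{0\}\iff\gcd(g,\widehat h)=1\iff g\mid\widehat g\iff g=\widehat g$ is exactly how Yang and Massey's result is usually derived, and your handling of the squarefreeness of $x^n-1$ and the involution on irreducible factors is clean. The equivalence $(2)\Leftrightarrow(3)$ is immediate from the definition of the reciprocal, as you note.
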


Let $\Bbb F_q$ be the finite field with $q$ elements, where $q$ is  a power of  a prime number $p$. The canonical additive character of $\Bbb F_q$ is defined as follows:
$$\chi:\Bbb F_q\rightarrow \mathbb{C}^*, \chi(x)=\zeta_p^{\Tr_{q/p}(x)},$$
where $\zeta_p=e^{\frac{2\pi i}{p}}$ is a $p$-th primitive root of unity and $\Tr_{q/p}$ denotes the trace function from $\Bbb F_q$ to $\Bbb F_p$. The orthogonal property of additive characters which can be found in  \cite{LN}
\begin{equation*}
\underset{x\in \Bbb F_q}\sum\chi(ax)=\left\{
\begin{aligned}
q& ~\text{ if }~a=0,\\
0& ~\text{ if }~a\in \Bbb F_q^*.\\
\end{aligned}
\right.
\end{equation*}
Let $\psi:\Bbb F_q\rightarrow\Bbb C^*$ be a multiplicative character of $\Bbb F_q^*$. The trivial multiplicative character
$\psi_0$ is defined by $\psi_0(x) = 1$ for all $x\in \Bbb F_q^*$ . For two multiplicative characters $\psi$,$\psi'$ of $\Bbb F_q^*$ ,
we define the multiplication by setting $\psi\psi'(x) = \psi(x)\psi'(x)$ for all $x \in \Bbb F_q^*$. Let $\bar{\psi}$ be
the conjugate character of $\psi$ defined by $\bar{\psi}(x) = \overline{\psi(x)}$, where $\overline{\psi(x)}$ denotes the complex conjugate of $\psi(x)$. It is easy to deduce that $\psi^{-1}=\bar{\psi}$. It is known \cite{LN} that all the
multiplicative characters form a multiplication group $\hat{\Bbb F_q^*}$ which is isomorphic to $\Bbb F_q^*$. The
 orthogonal property of multiplicative characters \cite{LN} is:
\begin{equation*}
\underset{x\in \Bbb F_q^*}\sum\psi(x)=\left\{
\begin{aligned}
q-1& ~\text{ if }~ \psi=\psi_0, \\
0& ~\text{ otherwise}. \\
\end{aligned}
\right.
\end{equation*}
The Gauss sum over $\Bbb F_q$ is defined by
\begin{equation*}
    G(\psi,\chi)=\sum_{x\in \Bbb F_q^*}\chi(x)\psi(x).
\end{equation*}
It is easy to see that $G(\psi_0,\chi) = -1 $ and $G(\bar{\psi},\chi) = \psi(-1)\overline{G(\psi,\chi)}$. Gauss sums can be viewed as
the Fourier coefficients in the Fourier expansion of the restriction of $\psi$ to $\Bbb F_q^*$ in terms of
the multiplicative characters of $\Bbb F_q$, i.e., for $x\in \Bbb F_q^*$,
\begin{equation}\label{f1}
   \chi(x)=\frac1{q-1} \sum_{x\in \hat{\Bbb F_q^*}}G(\bar{\psi},\chi)\psi(x).
\end{equation}

Using (\ref{f1}), we can get the following results.

\begin{lemma}\cite{LN} {\rm
Let $\chi$ be a nontrivial additive character of $\Bbb F_q$, $n\in \Bbb N$, and $\lambda$ a multiplicative character of $\Bbb F_q$ of order $d=gcd(n,q-1)$. Then
\begin{equation*}
    \sum_{x\in \Bbb F_q}\chi(ax^n+b)=\chi(b)\sum_{j=1}^{d-1}\bar{\lambda}^j(a)G(\lambda^j,\chi)
\end{equation*}
for any $a,b\in\Bbb F_q$ with $a\neq 0$. }
\end{lemma}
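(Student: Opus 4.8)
The plan is to substitute the Fourier expansion (\ref{f1}) into the sum and then collapse everything using the two orthogonality relations recalled above. First I would exploit additivity of $\chi$ to write $\chi(ax^n+b)=\chi(ax^n)\chi(b)$, so that
\[
\sum_{x\in\Bbb F_q}\chi(ax^n+b)=\chi(b)\Bigl(1+\sum_{x\in\Bbb F_q^*}\chi(ax^n)\Bigr),
\]
where the isolated $1$ is the contribution of $x=0$ (here $0^n=0$ and $\chi(0)=1$). Thus the whole statement reduces to evaluating $\sum_{x\in\Bbb F_q^*}\chi(ax^n)$.

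Next I would apply (\ref{f1}) to $\chi(ax^n)$ — legitimate because $ax^n\neq0$ when $x\in\Bbb F_q^*$ and $a\neq0$ — and interchange the two summations to get
\[
\sum_{x\in\Bbb F_q^*}\chi(ax^n)=\frac1{q-1}\sum_{\psi\in\hat{\Bbb F_q^*}}G(\bar{\psi},\chi)\,\psi(a)\sum_{x\in\Bbb F_q^*}\psi^n(x).
\]
By the orthogonality of multiplicative characters the innermost sum equals $q-1$ if $\psi^n=\psi_0$ and $0$ otherwise. Here I would observe that $\psi^n=\psi_0$ exactly when $\ord(\psi)\mid n$; since $\ord(\psi)\mid q-1$ automatically, this is equivalent to $\ord(\psi)\mid d$ with $d=\gcd(n,q-1)$, and because $\hat{\Bbb F_q^*}$ is cyclic, the characters with this property are precisely $\lambda^0,\lambda^1,\dots,\lambda^{d-1}$, the unique subgroup of order $d$ generated by the order-$d$ character $\lambda$. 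This leaves $\sum_{x\in\Bbb F_q^*}\chi(ax^n)=\sum_{j=0}^{d-1}\lambda^j(a)\,G(\bar{\lambda}^j,\chi)$.

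Finally I would peel off the $j=0$ term, which is $G(\psi_0,\chi)\lambda^0(a)=-1$ and cancels the $x=0$ contribution, and reindex the remaining sum by $j\mapsto d-j$, using $\lambda^d=\psi_0$ (hence $\lambda^{d-j}=\bar{\lambda}^j$ and $\bar{\lambda}^{d-j}=\lambda^j$) to turn $\sum_{j=1}^{d-1}\lambda^j(a)G(\bar{\lambda}^j,\chi)$ into $\sum_{j=1}^{d-1}\bar{\lambda}^j(a)G(\lambda^j,\chi)$; multiplying back by $\chi(b)$ then gives the stated identity. I expect the only genuinely delicate step to be the bookkeeping around which multiplicative characters survive the $n$-th power map, i.e.\ identifying the solutions of $\psi^n=\psi_0$ with the powers of a single order-$d$ character — this is exactly where the hypothesis $d=\gcd(n,q-1)$ and the cyclic structure of $\Bbb F_q^*$ are used; everything else is mechanical use of the two orthogonality relations and the definition of $G(\lambda^j,\chi)$. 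An alternative that bypasses (\ref{f1}) is to group $x\in\Bbb F_q^*$ according to the value $y=x^n$, use that each $y$ in the image of the $n$-th power map has exactly $\sum_{j=0}^{d-1}\lambda^j(y)$ preimages, and then apply the additive and multiplicative orthogonality relations directly; this yields the same computation in a slightly more elementary form.
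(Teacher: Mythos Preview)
Your argument is correct. The paper does not actually prove this lemma — it is quoted from \cite{LN}, with only the one-line remark ``Using (\ref{f1}), we can get the following results'' preceding the statement — so there is no detailed proof to compare against. Your main route, expanding $\chi(ax^n)$ via (\ref{f1}) and then invoking multiplicative orthogonality to isolate the characters $\psi$ with $\psi^n=\psi_0$, is precisely the argument the paper's remark is pointing to. The alternative you sketch at the end (grouping $x$ by the value $y=x^n$ and using that each $y$ in the image has $\sum_{j=0}^{d-1}\lambda^j(y)$ preimages) is in fact the proof given in Lidl--Niederreiter itself; both routes are equivalent, the first being a character-theoretic repackaging of the second's combinatorial count.
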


In general, the explicit determination of Gauss sums is a difficult problem.  For future use, we state the quadratic Gauss sums here.

\begin{lemma}\cite{LN} {\rm
 Let $\Bbb F_q$ be a finite field with $q=p^s$, where $p$ is an odd prime and $s\in \Bbb N$. Let $\eta$ be the quadratic character of $\Bbb F_q$ and let $\chi$ be the canonical additive character of $\Bbb F_q$. Then
\begin{equation*}
    G(\eta,\chi)=
    \left\{
    \begin{array}{ll}
      (-1)^sq^{1/2} &\mbox{ if } p \equiv1\pmod 4,    \\
      (-1)^{s-1}(\sqrt{-1})^sq^{1/2} &\mbox{ if }p \equiv3\pmod 4.
    \end{array}
    \right.
\end{equation*}}
\end{lemma}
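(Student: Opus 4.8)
The plan is to obtain the formula from the classical prime‑field case together with the Davenport--Hasse lifting relation, so that the only substantial work is the sign (fourth‑root‑of‑unity) determination over $\Bbb F_p$.

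\emph{Step 1 (the modulus).} Since $\eta$ is the quadratic character it has order two and is real‑valued, so $\bar\eta=\eta$; the identity $G(\bar\psi,\chi)=\psi(-1)\overline{G(\psi,\chi)}$ recorded above then gives $G(\eta,\chi)=\eta(-1)\overline{G(\eta,\chi)}$. Multiplying by $G(\eta,\chi)$ and using the standard fact $|G(\eta,\chi)|^2=q$ (a one‑line consequence of the orthogonality relations for $\chi$) yields $G(\eta,\chi)^2=\eta(-1)\,q$. Because $\eta(-1)=(-1)^{(q-1)/2}$ equals $1$ exactly when $q\equiv 1\pmod 4$, this already shows $G(\eta,\chi)\in\{\pm q^{1/2}\}$ in that case and $G(\eta,\chi)\in\{\pm\sqrt{-1}\,q^{1/2}\}$ otherwise; everything beyond this is the determination of the ambiguous sign.

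\emph{Step 2 (the prime field).} For $q=p$ set $g_p=\sum_{x\in\Bbb F_p}\zeta_p^{x^2}$. Counting the $1+\eta_p(v)$ square roots of each $v\in\Bbb F_p^*$ (where $\eta_p,\chi_p$ denote the quadratic and canonical characters of $\Bbb F_p$) and using $\sum_{v\in\Bbb F_p}\zeta_p^v=0$, one finds $g_p=G(\eta_p,\chi_p)$. The crucial input is Gauss's sign theorem: $g_p=p^{1/2}$ if $p\equiv 1\pmod 4$ and $g_p=\sqrt{-1}\,p^{1/2}$ if $p\equiv 3\pmod 4$. I would invoke this as the known classical evaluation (it is the $s=1$ instance of the lemma, proved in \cite{LN}, e.g.\ by evaluating a product of the shape $\prod_{1\le k<p/2}(\zeta_p^k-\zeta_p^{-k})$ or via the multiplicities of the eigenvalues of the finite Fourier transform matrix). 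This is the step I expect to be the genuine obstacle: Step 1 leaves a fourth‑root‑of‑unity ambiguity that no formal character manipulation can resolve, so one must import this reciprocity‑type fact.

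\emph{Step 3 (lifting to $\Bbb F_q$).} Take the base field $\Bbb F_p$ and the extension $\Bbb F_q=\Bbb F_{p^s}$. By definition the canonical additive character of $\Bbb F_q$ is $\chi=\chi_p\circ\Tr_{q/p}$, and since the norm $\mathrm{N}_{q/p}$ maps $\Bbb F_q^*$ onto $\Bbb F_p^*$, the composite $\eta_p\circ\mathrm{N}_{q/p}$ is a nontrivial order‑two multiplicative character of $\Bbb F_q^*$, hence equals $\eta$. Applying the Davenport--Hasse relation $-G\big(\theta\circ\mathrm{N}_{q/p},\,\chi\big)=\big(-G(\theta,\chi_p)\big)^s$ to $\theta=\eta_p$ gives $G(\eta,\chi)=(-1)^{s-1}G(\eta_p,\chi_p)^s$. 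Inserting the two prime‑field values from Step 2 and writing $q^{1/2}=p^{s/2}$ then yields the stated evaluation of $G(\eta,\chi)$ in each of the cases $p\equiv 1\pmod 4$ and $p\equiv 3\pmod 4$; as a consistency check, squaring the answer recovers $\eta(-1)q$ from Step 1.
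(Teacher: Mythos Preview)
The paper does not give its own proof of this lemma; it is simply quoted from Lidl--Niederreiter \cite{LN} as a known result, so there is no in-paper argument to compare against. Your three-step outline --- compute $G(\eta,\chi)^2=\eta(-1)q$ to pin the value down to a fourth root of unity, invoke Gauss's sign theorem over $\Bbb F_p$, and then lift via the Davenport--Hasse relation $-G(\eta,\chi)=\big(-G(\eta_p,\chi_p)\big)^s$ --- is exactly the standard textbook proof, and is in fact how \cite{LN} itself establishes this formula (Theorem~5.15 there). So your approach is the canonical one and is correct.

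One incidental observation: your Step~3 correctly produces $G(\eta,\chi)=(-1)^{s-1}q^{1/2}$ in the case $p\equiv 1\pmod 4$, whereas the lemma as printed in the paper has $(-1)^{s}q^{1/2}$ in that line. Checking $s=1$ shows your value is the right one (the classical quadratic Gauss sum over $\Bbb F_p$ is $+\sqrt{p}$ when $p\equiv 1\pmod 4$), so this appears to be a typographical slip in the paper's transcription of the formula; the $p\equiv 3\pmod 4$ line is stated correctly and matches your computation.
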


\section{Absolute coset leaders of ternary BCH codes}

In this section, we will find the  first, second and third largest absolute coset leaders of ternary cyclic BCH codes of length $n=3^m-1$ over $\Bbb F_3$, where $m=\ord_n(3)$.

In \cite{LDXG,LDL1,YLLY}, the authors determined the largest and second largest  coset leaders of BCH codes in three cases:
 (1) $n=q^m-1$; (2) $n=\frac{q^m-1}{q-1}$; (3)  $n = q^l + 1$.
In \cite{HYWS}, the authors determined the largest and second largest  absolute coset leaders of binary BCH codes.

Before presenting our results,  we describe some notations.
The $3$-adic expansion of an integer $i\in \Bbb Z_n$ is denoted by  
\begin{equation*}
  i=i_0+i_13+\cdots+i_{m-1}3^{m-1}\triangleq(i_0,i_1,\ldots,i_{m-1}),
\end{equation*}
where each $0\le i_t \le 2$.

According to the definition of  absolute coset leaders, we can get the following proposition.

\begin{proposition}\cite{HYWS} {\rm Let the {\bf absolute coset leader} of $C_\delta$ be $\delta$ and $n=q^m-1$.

(1) Then $\delta\leq \frac n 2$.

(2) If $n-\delta \notin C_\delta$, then $C_{n-\delta}\neq C_\delta$, $|C_{n-\delta}|=|C_\delta|$, and $C_{n-\delta}$ has the same absolute coset leader $\delta$ as one in  $C_\delta$. }
\end{proposition}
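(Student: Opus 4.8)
The plan is to exploit the involution $\iota\colon \Bbb Z_n\to\Bbb Z_n$, $\iota(x)=n-x$ (i.e.\ negation modulo $n$, since $n\equiv 0\pmod n$), together with the elementary observation that the set whose minimum defines the absolute coset leader is $\iota$-invariant by construction. Throughout I use $\gcd(q,n)=1$, which holds because $n=q^m-1$.

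For part (1) I would set $S_\delta=\{k,\,n-k : k\in C_\delta\}$, so that by definition $\delta=\min S_\delta$. Since $S_\delta$ is closed under $\iota$ and $\delta\in S_\delta$, also $n-\delta=\iota(\delta)\in S_\delta$; minimality of $\delta$ then forces $\delta\le n-\delta$, i.e.\ $\delta\le n/2$. (One should remark that $\delta\ge 0$, and that equality $\delta=n/2$ can occur only when $n$ is even.) This part is immediate once the invariance of $S_\delta$ is noted.

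For part (2), assume $n-\delta\notin C_\delta$. First I would dispose of the degenerate case: if $\delta=0$ then $C_\delta=C_0=\{0\}$ and $n-\delta\equiv 0\in C_0$, contradicting the hypothesis; hence $1\le\delta\le n/2$ and $n-\delta$ is a well-defined nonzero element of $\Bbb Z_n$. The key step is the identity
\[
C_{n-\delta}=\{\,n-k \bmod n : k\in C_\delta\,\}=\iota(C_\delta),
\]
which holds because $q^{j}(n-\delta)\equiv -(q^{j}\delta)\equiv n-(q^{j}\delta\bmod n)\pmod n$ for every $j$, so multiplication by powers of $q$ commutes with $\iota$; here $\gcd(q,n)=1$ guarantees that $\iota$ carries a $q$-cyclotomic coset onto a $q$-cyclotomic coset. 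From this identity the three assertions follow at once: (i) $n-\delta\in C_{n-\delta}$ while $n-\delta\notin C_\delta$, so $C_{n-\delta}\ne C_\delta$; (ii) $\iota$ restricts to a bijection $C_\delta\to C_{n-\delta}$, whence $|C_{n-\delta}|=|C_\delta|$; and (iii) writing $S_{n-\delta}=\{k,\,n-k : k\in C_{n-\delta}\}$ and substituting $k=n-j$ with $j\in C_\delta$ gives $S_{n-\delta}=\{n-j,\,j : j\in C_\delta\}=S_\delta$, so $C_{n-\delta}$ and $C_\delta$ share the absolute coset leader $\delta=\min S_\delta=\min S_{n-\delta}$.

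I do not expect any step here to be a genuine obstacle; the proposition is essentially a matter of unwinding the definition of absolute coset leader and tracking the involution $\iota$. The only points that deserve a little care are the edge cases ($\delta=0$ and $\delta=n/2$ for even $n$) and the verification that $\iota$ indeed sends $q$-cyclotomic cosets to $q$-cyclotomic cosets of equal cardinality — which is exactly where the coprimality $\gcd(q,n)=1$ is used.
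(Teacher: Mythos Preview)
The paper does not supply its own proof of this proposition; it is quoted from \cite{HYWS} without argument. Your proof is correct and is the natural one: the set $S_\delta=\{k,\,n-k:k\in C_\delta\}$ is manifestly stable under the negation involution $\iota$, which immediately gives part~(1), and the commutation of $\iota$ with multiplication by $q$ yields $C_{n-\delta}=\iota(C_\delta)$ and hence all three claims in part~(2). One minor remark: the coprimality $\gcd(q,n)=1$ is needed only for the $q$-cyclotomic cosets to be well defined in the first place; the fact that $\iota$ carries cosets to cosets of the same size follows simply from $\iota$ being a bijection on $\Bbb Z_n$ that commutes with $x\mapsto qx$, so you need not invoke coprimality at that particular step.
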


\begin{theorem} \label{th1} {\rm Let $q=3$,   $m$  a positive integer, and   $n=q^m-1$.
 Then  $\delta_1=\frac{3^m-1}{2}$ is  the  {\bf largest  absolute coset leader} among all  $3$-cyclotomic cosets, $C_{\delta_1}=\{\delta_1\}$, and $|C_{\delta_1}|=1$.}
\end{theorem}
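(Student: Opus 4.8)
The plan is to work directly with the $3$-adic expansion notation introduced just before the theorem. Write $\delta_1 = \frac{3^m-1}{2}$ and observe first that $\delta_1 = 1 + 3 + 3^2 + \cdots$ is not the right pattern; rather $\delta_1 = 1 + 1\cdot 3 + 1 \cdot 3^2 + \cdots$ would give $\frac{3^m-1}{2}$ only in base $2$, so instead I would check that in base $3$ we have $\delta_1 = (1,1,\ldots,1)$ interpreted with digit $1$ in every position is $\frac{3^m-1}{2}$: indeed $\sum_{t=0}^{m-1} 3^t = \frac{3^m-1}{2}$, so $\delta_1 = (1,1,\ldots,1)$ in the notation of the paper. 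The first key step is then to compute $3\delta_1 \bmod n$. Since $3 \cdot \delta_1 = 3 + 3^2 + \cdots + 3^m = 3^m - 1 + \left(\frac{3^m-1}{2} - (3^m-1)\right)$... more cleanly, $3\delta_1 = 3^m \cdot 1 + (3\delta_1 - 3^m)$, and modulo $n = 3^m - 1$ we have $3^m \equiv 1$, so $3\delta_1 \equiv \delta_1 \pmod n$. This shows $C_{\delta_1} = \{\delta_1\}$ and $|C_{\delta_1}| = 1$ immediately. Equivalently, the digit vector $(1,1,\ldots,1)$ is fixed under the cyclic shift that multiplication by $3$ induces on $3$-adic digits mod $3^m-1$.

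The second step is to confirm that $\delta_1$ is the absolute coset leader of its own coset: since $C_{\delta_1} = \{\delta_1\}$, the set $\{k, n-k : k \in C_{\delta_1}\} = \{\delta_1, n - \delta_1\}$, and $n - \delta_1 = 3^m - 1 - \frac{3^m-1}{2} = \frac{3^m-1}{2} = \delta_1$, so the absolute coset leader of $C_{\delta_1}$ is exactly $\delta_1$. This also shows $\delta_1$ is its own "mirror" value $n-\delta_1$, which is the key reason it is the \emph{largest} possible absolute coset leader: by part (1) of the Proposition, every absolute coset leader is at most $\frac{n}{2} = \delta_1$, and $\delta_1$ itself is an integer attaining this bound. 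So the main content is simply verifying that the integer $\frac{3^m-1}{2}$ is in fact an absolute coset leader — part (1) of the Proposition gives the upper bound for free, and the computation $3\delta_1 \equiv \delta_1$ together with $n-\delta_1 = \delta_1$ does the rest.

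The final step is to argue that $\delta_1$ is not only \emph{an} absolute coset leader equal to $\frac n 2$ but is \emph{the} largest among all $3$-cyclotomic cosets — i.e.\ no other coset has a strictly larger absolute coset leader. But this is immediate from Proposition~(1): any absolute coset leader $\delta$ satisfies $\delta \le \frac n 2 = \delta_1$, so $\delta_1$ is maximal, and it is uniquely the maximum since $\frac n 2$ is an integer only for even $n$, which holds here as $n = 3^m - 1$ is even. I do not anticipate a serious obstacle: the only mild subtlety is making sure the claim "$\delta_1$ is the absolute coset leader of $C_{\delta_1}$" is stated for $C_{\delta_1}$ itself rather than merely bounded, and that $|C_{\delta_1}|=1$ is correctly derived from $q\delta_1 \equiv \delta_1 \pmod n$ with no smaller power of $q$ needed. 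Everything else is the one-line modular computation $q \cdot \frac{q^m-1}{2} \equiv \frac{q^m-1}{2} \pmod{q^m-1}$.
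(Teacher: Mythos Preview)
Your proof is correct, and it takes a shorter route than the paper's. You reduce the maximality of $\delta_1$ to the one-line observation that $\delta_1 = n/2$ together with Proposition~1(1), which already says every absolute coset leader is at most $n/2$; since $\delta_1$ attains this bound and is itself an absolute coset leader (because $C_{\delta_1}=\{\delta_1\}$ and $n-\delta_1=\delta_1$), it is the maximum. The paper instead argues directly on $3$-adic digits: for any $i$ whose expansion contains a $0$ (respectively a $2$), some cyclic shift $3^l i$ (respectively $n-3^l i$) ends in $0$ and is therefore strictly below $\delta_1=(1,1,\ldots,1)$, forcing the absolute coset leader of $C_i$ below $\delta_1$. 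Your argument is cleaner for this theorem; the paper's digit-shifting argument is more laborious here but is the template that carries the real work in the proofs of Theorems~\ref{th2} and~\ref{th3}, where no blanket upper bound like $n/2$ is available and one must genuinely locate a small element in each coset by manipulating digit patterns.
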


\begin{proof}
 We shall  verify that  $\delta_1$ is the largest absolute coset leader among all $3$-cyclotomic cosets $C_s, 0\le s\le n-1$.

There are two  $3$-adic expansions of $n$ and $\delta_1$:
\begin{eqnarray}n&=&(2, 2, 2, 2, \ldots, 2, 2, 2), \\
 \delta_1&=&(1, 1, 1, 1, \ldots, 1, 1 ). \nonumber  \end{eqnarray}

Firstly, we prove that $\delta_1$ is the absolute  coset leader of the $q$-cyclotomic cosets $C_{\delta_1}$.
For $ 1 \leq l \leq m-1$,
$$3^l\delta_1 \pmod n\equiv(1, 1, 1, 1, \ldots, 1, 1 )$$
Hence $C_{\delta_1}=\{\delta_1\}$ only has one element, i.e. $|C_{\delta_1}|=1$.

Secondly, we will show  that $\delta_1$ is  the largest absolute coset leader among all $q$-cyclotomic cosets.

For $0\le i\le n-1$, there is a $3$-adic expansion:
$$i=i_0+i_13+\cdots+i_{m-1}3^{m-1}=(i_0,i_1,\ldots, i_{m-1}),$$
where each $i_t\in \{0,1,2\}$, $t=0,1,\ldots, m-1$.

If the expansion of $i$ has $0$. Without loss of generality, let  $i=(\ldots, 0,\ldots)$. Then there is an integer $l$, $0\le l\le m-1$, such that $3^li \pmod n \equiv(\ldots,0)\in C_i$, so $3^li\pmod n <\delta_1$ by (3.1). Hence the absolute coset leader in $C_i$ is less than $\delta_1$.

If the expansion of $i$ has $2$. Similarly, let  $i=(\ldots, 2,\ldots)$, there is an integer $l$, $0\le l\le m-1$,  such that $n-3^li\pmod n <\delta_1$.
 Hence the absolute coset leader in $C_i$ is less than $\delta_1$.

Therefore, $\delta_1$ is the largest absolute coset leader among all cosets.

This completes the proof.
\end{proof}

\begin{theorem} \label{th2} {\rm Let $q=3$,   $m$  a positive integer, and   $n=q^m-1$.

(1) If $m\ge 3$ is an odd integer, then $\delta_2=\frac{3^{m-1}-1}{4}+3^{m-2}$ is the  {\bf second largest absolute coset leader},  $C_{\delta_2}\ne C_{n-\delta_2}$,  and  $|C_{\delta_2}|=|C_{n-\delta_2}|=m$.

(2) If $m\ge 2$ is an even integer, then  $\delta_2=\frac{3^m-1}{4}$  is the  {\bf second largest absolute coset leader}, $C_{\delta_2}=\{\delta_2,n-\delta_2\}$, and  $|C_{\delta_2}|=2$.}
\end{theorem}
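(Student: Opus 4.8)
The plan is to mimic the structure of the proof of Theorem~\ref{th1}, working throughout with $3$-adic expansions and using Proposition~2.5(1)--(2) together with the already-established fact that $\delta_1 = \frac{3^m-1}{2} = (1,1,\ldots,1)$ is the unique largest absolute coset leader. For each of the two parities of $m$, I would proceed in three stages: (i) write down the $3$-adic expansion of the candidate $\delta_2$ and verify directly that it is the coset leader of $C_{\delta_2}$ by cycling the digit string and checking no cyclic shift (or its complement-to-$n$) produces a smaller integer, which simultaneously pins down $|C_{\delta_2}|$; (ii) in the odd case, verify $C_{\delta_2}\neq C_{n-\delta_2}$ by showing $n-\delta_2$ is not a cyclic shift of $\delta_2$ (compare digit patterns), so Proposition~2.5(2) gives $|C_{n-\delta_2}|=|C_{\delta_2}|=m$; in the even case, show $n-\delta_2$ \emph{is} the shift of $\delta_2$ by $m/2$ positions, giving $C_{\delta_2}=\{\delta_2,n-\delta_2\}$ and $|C_{\delta_2}|=2$; (iii) prove maximality among all absolute coset leaders strictly below $\delta_1$.

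For stage (i), the even case is the clean one: $\delta_2 = \frac{3^m-1}{4}$ has $3$-adic expansion $(0,2,0,2,\ldots,0,2)$ (the repeating block $02$, of length $m$), since $\frac{3^m-1}{4} = \sum_{j=0}^{m/2-1} 2\cdot 3^{2j+1}$; multiplying by $3$ cyclically shifts the block to $(2,0,2,0,\ldots)$, which as an integer equals $3\delta_2 \bmod n = n-\delta_2$, and a further shift returns to $\delta_2$, so $|C_{\delta_2}|=2$ and $C_{\delta_2}=\{\delta_2,n-\delta_2\}$. For the odd case, $\delta_2 = \frac{3^{m-1}-1}{4}+3^{m-2}$: here $\frac{3^{m-1}-1}{4}$ expands as the repeating block $02$ of length $m-1$, and adding $3^{m-2}$ bumps the top digit, yielding an expansion of length $m$ that is \emph{aperiodic}, hence $|C_{\delta_2}|=m$; one then checks that among the $m$ cyclic shifts of this digit string, none — nor the complement-to-$n$ of any — drops below $\delta_2$, and that $n-\delta_2$ (whose digits are the $2$'s-complement pattern) is not among the shifts, so $C_{n-\delta_2}\neq C_{\delta_2}$.

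For stage (iii), maximality, I would argue by contradiction: suppose some coset $C_i$ has absolute coset leader $\delta$ with $\delta_2 < \delta < \delta_1$. Write $\delta$ in $3$-adic form; since $\delta$ is itself an absolute coset leader we may use Proposition~2.5(1) and the two "digit-reduction" moves from Theorem~\ref{th1}'s proof (a $0$-digit lets a shift land below $\delta_1$; a $2$-digit lets $n$ minus a shift land below $\delta_1$) to constrain which digit patterns can possibly sit in the narrow window $(\delta_2,\delta_1)$. Because $\delta_1$'s expansion is all $1$'s, any $\delta<\delta_1$ must, in its minimal representative, have a $0$ somewhere in a high position or otherwise deviate; a careful case analysis on the position of the first (most significant) digit where $\delta$'s expansion differs from $(1,1,\ldots,1)$ should show that to stay above $\delta_2$ the only surviving possibility is the claimed $\delta_2$ itself (and, in the odd case, the companion $n-\delta_2$, which has the same absolute coset leader). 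The main obstacle is precisely this stage~(iii) case analysis: keeping the bookkeeping of cyclic shifts versus complements tight enough to rule out every intermediate digit pattern without an exhausting enumeration, and handling the parity of $m$ uniformly. Stages (i) and (ii) are essentially mechanical verifications with the $3$-adic expansions once the right block patterns $(0,2,0,2,\ldots)$ are identified.
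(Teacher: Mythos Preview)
Your overall architecture---write down the $3$-adic expansion, verify the coset structure by cycling digits, then argue maximality---matches the paper exactly, and stages (i)--(ii) are fine modulo one slip: with the paper's convention $i=(i_0,i_1,\ldots,i_{m-1})$ meaning $i=\sum i_t 3^t$, one has $\frac{3^m-1}{4}=\sum_{j}2\cdot 3^{2j}=(2,0,2,0,\ldots,2,0)$, not $(0,2,\ldots,0,2)$; you have swapped $\delta_2$ and $n-\delta_2$. This is harmless for the coset computation but worth fixing.

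The real gap is stage~(iii). Your proposed tactic---analyse the most significant position where $\delta$ first deviates from $(1,1,\ldots,1)$---does not exploit the cyclic symmetry and will not cleanly pin down the answer. The paper's maximality argument works differently and is the idea you are missing: it identifies a short list of \emph{forbidden local substrings} in the $3$-adic expansion of an arbitrary $i$, each of which forces some cyclic shift $3^l i$ (or its complement $n-3^l i$) to drop below $\delta_2$. Concretely, for even $m$ the forbidden patterns are $(00)$, $(22)$, $(10)$, $(12)$, and $(11)$ (the last reduces to $(110)$ or $(112)$); for odd $m$ they are $(00)$, $(22)$, $(110)$, $(112)$, $(010)$, $(212)$. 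Once these are excluded, the only expansions left are exactly those obtained by inserting some $1$'s into an alternating string $(2,0,2,0,\ldots)$, and a parity count on the number of inserted $1$'s (forced to be even when $m$ is even, odd when $m$ is odd) together with a short argument about adjacent $(210)$-blocks finishes the job. Your single-digit moves inherited from Theorem~\ref{th1} only get you below $\delta_1$; the refinement to two- and three-digit patterns is what gets you below $\delta_2$.
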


\begin{proof}
(1) If $m$ is odd and the $3$-adic expansion of $\delta_2$ is as follows:
 $$\delta_2=\frac{3^{m-1}-1}{4}+3^{m-2}=(\underbrace{2,0, 2,0, \ldots, 2,0 }_{(m-3)/2~ ~(2,0)'s},2,1, 0),$$
then $\delta_2<\delta_1$.

Firstly, we prove that $\delta_2$ is the absolute  coset leader of the $q$-cyclotomic cosets   $C_{\delta_2}$ and $C_{n-\delta_2}$.
For $  1 \leq l \leq m-1$, if $l$ is odd, then
$$3^l\delta_2 \pmod n\equiv(\underbrace{2,0,\ldots,2,0,}_{(l-3)/2
\ (2,0)'s}2,1,0,\underbrace{2,0,\ldots,2,0}_{(m-l)/2~ ~(2,0)'s});$$
if $l$ is even, then
$$3^l\delta_1 \pmod n\equiv(0,\underbrace{2,0,\ldots,2,0,}_{(l-4)/2~ ~(2,0)'s}2,1,0,\underbrace{2,0,\ldots,2,0,}_{(m-l-1)/2~ ~(2,0)'s}2).$$
Hence $C_{\delta_2}$ has $m$ distinct elements, i.e. $|C_{\delta_2}|=m$, and  $\delta_2=\min\{k,n-k: k\in C_{\delta_2}\}$, which is the absolute coset leader in $C_{\delta_2}$.
Similarly, we can prove that $|C_{n-\delta_2}|=m$, $C_{\delta_2}\ne C_{n-\delta_2}$, and $C_{n-\delta_2}$ has also the absolute coset leader $\delta_2$.

Secondly, we prove that $\delta_2$ is the second largest absolute coset leader.

For $0\le i\le n-1$, there is a $3$-adic expansion:
$$i=i_0+i_13+\ldots+i_{m-1}3^{m-1}=(i_0,i_1,\ldots, i_{m-1}),$$
which has at least two elements among $0,1,2$. Otherwise, the expansion of $i$ has only one elements of $0,1,2$, then $i=(0,\ldots,0)<\delta_2$,  $i=\delta_1$, or $i=n-\delta_1$.

 If the expansion of $i$ has  a  consecutive form:  $(00)$, i.e.,  $i=(\ldots, 0, 0,\ldots)$.  Then there is an integer $l$, $0\le l\le m-1$, such that $3^li\pmod n \equiv(\ldots, 0,0)\in C_i$, so $3^li\pmod n <\delta_2$. Hence the absolute coset leader of $C_i$ is less than $\delta_2$. Similarly, we can prove it if the expansion of $i$ has consecutive $(22)$.

 If the expansion of $i$ has a  form:  $(110)$, i.e.,  $i=(\ldots, 1,1,0,\ldots)$.  Then there is an integer $l$, $0\le l\le m-1$, such that $3^li\pmod n \equiv(\ldots, 1,1,0)\in C_i$, so $3^li\pmod n<\delta_2$. Hence the absolute coset leader of $C_i$ is less than $\delta_2$. Similarly, we can prove if  the expansion of $i$ has a  from: $(112)$. Then there is an integer $l$, $0\le l\le m-1$, such that $3^li\pmod n \equiv(\ldots, 1,1,2)\in C_i$, so $n-3^li\pmod n<\delta_2$. Hence the absolute coset leader of $C_i$ is less than $\delta_2$.

If the expansion of $i$ has a  form: $(010)$ (or $(212)$), then there is an integer $l$ such that $3^li\pmod n<\delta_2$ (or $n-3^li\pmod n<\delta_2$, respectively). Hence  the absolute coset leader of $C_i$ is less than $\delta_2$.

If  the expansion of $i$ has not any  forms: $(00)$, $(11)$, $(22)$, $(010)$, and $(212)$. We shall  prove that the absolute coset leader of $C_i$ is less than $\delta_2$.
From the above, the expansion of $i$ is equivalent to insert some 1's into the sequence $(2,0,\ldots,2,0)$ (or $(0,2,\ldots,0,2)$).
Since $m$ is an odd integer, the number of 1's in the expansion of $i$ is an odd integer  $k$.

If $k=1$, i.e., the expansion of $i$ has only one form:  $(210)$ (or $(021)$), then there is an integer $l$, $0\le l\le m-1$, such that $3^li\pmod n \equiv \delta_2$ (or $n-3^li\pmod n \equiv \delta_2$, respectively).

If $k\ge 3$, without loss of generality,  there are two adjacent   $(210)'s$ in the expansion of $i$, i.e., $$i=(\ldots,\underbrace{2,1,0},2,0,\ldots,2,0,\underbrace{2,1,0},\ldots).$$    Then there is an integer $l$, $0\le l\le m-1$, such that
$$3^li\pmod n \equiv (\ldots,\underbrace{2,1,0},2,0,\ldots,2,0,\underbrace{2,1,0})<\delta_2.$$

Similarly, if there are two adjacent   $(012)'s$ in the expansion of $i$, i.e.,
$$i=(\ldots,\underbrace{0,1,2},0,2,\ldots,0,2,\underbrace{0,1,2},\ldots).$$    Then there is an integer $l$, $0\le l\le m-1$, such that
$$n-3^li\pmod n \equiv (\ldots,\underbrace{2,1,0},2,0,\ldots,2,0,\underbrace{2,1,0})<\delta_2.$$

Therefore $\delta_2$ is the second largest absolute coset leader for $m$ is odd.

(2) If $m$ is even, and  the expansion of $\delta_2$ is as follows:
 $$\delta_2=\frac{3^m-1}{4}=(\underbrace{2,0, 2,0, \ldots, 2,0 }_{m/2~ ~(2,0)'s}),$$
then $\delta_2<\delta_1$.

Firstly, we prove that $\delta_2$ is the absolute  coset leader of the $q$-cyclotomic cosets $C_{\delta_2}$.
For $  1 \leq l \leq m-1$, if $l$ is odd, then
$3^l\delta_2 \pmod n\equiv\delta_2$, if $l$ is even, then
$n-3^l\delta_1 \pmod n\equiv\delta_2$.
Hence $C_{\delta_2}=\{\delta_2,n-\delta_2\}$ and $|C_{\delta_2}|=2$. It is obvious that $\delta_2$ is the absolute coset leader in $C_{\delta_2}$.

Secondly,  we prove that $\delta_2$ is the second largest absolute coset leader.

For $1\le i\le n-1$, the $3$-adic expansion of $i$ is as follows:
$i=(i_0,i_1,\ldots, i_{m-1})$, which has at least two elements among $0,1,2$.

If the expansion of $i$ has a form: $(10)$ (or $(12)$). Then there is an integer $l$, $0\le l\le m-1$, such that $3^li\pmod n \equiv(\ldots, 1,0)\in C_i$ (or $3^li\pmod n \equiv(\ldots, 1,2)\in C_i$), so $3^li\pmod n <\delta_2$ (or $n-3^li\pmod n <\delta_2$, respectively). Hence,  the absolute coset leader in $C_i$ is less than $\delta_2$.

If the expansion of $i$ has a consecutive form:  $(11)$. Then the expansion of  $i$ has $(110)$ or $(112)$. From the above, the absolute coset leader in $C_i$ is less than $\delta_2$.

If the expansion of $i$ has a consecutive form:  $(00)$ (or $(22)$). Then there is an integer $l$, $0\le l\le m-1$, such that $3^li\pmod n \equiv(\ldots, 0,0)\in C_i$ (or $3^li\pmod n \equiv(\ldots, 2,2)\in C_i$), so $3^li\pmod n <\delta_2$ (or $n-3^li\pmod n <\delta_2$, respectively). Hence, the absolute coset leader in $C_i$ is less than $\delta_2$.

Therefore $\delta_2$ is the second largest absolute coset leader.

This completes the proof.
\end{proof}

\begin{theorem} \label{th3} {\rm Let $q=3$,   $m$  a positive integer, and   $n=q^m-1$.

(1) If $m\equiv 0\pmod 4$ and $m\ge 4$, then $\delta_3=\frac{3^m-1}{5}$ is the  {\bf third largest absolute coset leader},  $C_{\delta_3}=\{\delta_3,2\delta_3,n-3\delta_3,n-2\delta_3\}$, and  $|C_{\delta_3}|=4$.

(2) If $m\equiv 2\pmod 4$ and $m\ge 6$, then  $\delta_3=\frac{3^{m-6}-1}{5}+3^{m-6}+2\cdot3^{m-5}+2\cdot3^{m-3}+3^{m-2}$  is the  {\bf third largest absolute coset leader}, $C_{\delta_3}\ne C_{n-\delta_3}$,  and  $|C_{\delta_3}|=|C_{n-\delta_3}|=m$.}
\end{theorem}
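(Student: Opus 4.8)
The plan is to follow the template of Theorems~\ref{th1} and~\ref{th2}. First I would record the $3$-adic expansions. Since $\ord_5(3)=4$, in case~(1) (where $m\equiv0\pmod4$) we have $5\mid 3^m-1$ and
$$\delta_3=\frac{3^m-1}{5}=(\underbrace{1,2,1,0,\ \ldots\ ,1,2,1,0}_{m/4\ \text{blocks}}),$$
while in case~(2) (where $m\equiv2\pmod4$) we have $m-6\equiv0\pmod4$, $5\mid 3^{m-6}-1$, and
$$\delta_3=(\underbrace{1,2,1,0,\ \ldots\ ,1,2,1,0}_{(m-6)/4\ \text{blocks}},1,2,0,2,1,0).$$
A crude size estimate (or a comparison of leading digits) gives $\delta_3<\delta_2<\delta_1$, so $\delta_3$ is a genuine candidate below the two leaders of Theorems~\ref{th1} and~\ref{th2}.

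Next I would pin down $C_{\delta_3}$. In case~(1), multiplication by $3$ modulo $n$ is the cyclic shift of the $3$-adic expansion and $5\delta_3=n$, so $3^j\delta_3\equiv(3^j\bmod 5)\delta_3\pmod n$; since $\ord_5(3)=4$ this gives $C_{\delta_3}=\{\delta_3,2\delta_3,3\delta_3,4\delta_3\}$, hence $|C_{\delta_3}|=4$, and using $n=5\delta_3$ one checks $\{k,\,n-k:k\in C_{\delta_3}\}=\{\delta_3,2\delta_3,3\delta_3,4\delta_3\}$, so $\delta_3$ is the absolute coset leader and $n-\delta_3\in C_{\delta_3}$. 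In case~(2) I would compute the $m$ cyclic shifts directly and note that the length-$6$ tail $(1,2,0,2,1,0)$ destroys the period-$4$ pattern — for example the cyclic word of $\delta_3$ contains the factor $(2,0,2)$ exactly once — so the expansion is aperiodic, all $m$ shifts are distinct, and $|C_{\delta_3}|=m$; comparing $\delta_3$ (from the most significant digit down) against the $2m$ integers $\{k,\,n-k:k\in C_{\delta_3}\}$ shows that it is the smallest, hence the absolute coset leader, and since the complemented word (that of $n-\delta_3$) contains no factor $(2,0,2)$ it is not a cyclic shift of the word of $\delta_3$, so $C_{\delta_3}\ne C_{n-\delta_3}$ and $|C_{n-\delta_3}|=m$.

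The substantive part is that $\delta_3$ is the \emph{third} largest absolute coset leader. I would reformulate this lexicographically: writing $v(i)=(i_{m-1},\ldots,i_0)$ for the reversed $3$-adic word, one has $i<j$ iff $v(i)<_{\mathrm{lex}}v(j)$, the word $v(n-i)$ is the digitwise complement $(2-i_t)_t$ of $v(i)$, and the absolute coset leader of $C_i$ is the lexicographically least word among all cyclic rotations of $v(i)$ and their complements. Consequently the absolute coset leader of $C_i$ is $\ge\delta_3$ exactly when every cyclic rotation $\sigma$ of $v(i)$ lies in the lexicographic interval $[\,v(\delta_3),\,v(n-\delta_3)\,]$. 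From this I would extract a finite, complement-closed list of forbidden cyclic factors of $v(i)$: the short ones $(0,0),(2,2),(0,1,0),(2,1,2),(0,1,1),(2,1,1)$ (which in turn preclude $(1,1,0)$ and $(1,1,2)$), and, by comparison with the periodic block resp.\ with the tail of $v(\delta_3)$, the factor $(0,1,2,0)$ together with its complement in case~(1), and their length-$5$ and length-$6$ analogues (such as $(0,1,2,0,1)$ and $(0,2,0,2,1,0)$) in case~(2). Each forbidden factor, once a cyclic shift brings it to the most significant positions, forces $3^l i<\delta_3$ or $n-3^l i<\delta_3$. It then remains to prove the combinatorial statement that a cyclic ternary word of length $m$ avoiding every forbidden factor must be a cyclic rotation of $(1,\ldots,1)$, of $(2,0,\ldots,2,0)$, or of the word of $\delta_3$; as in the proof of Theorem~\ref{th2}(1) a counting argument modulo $4$ on the digit patterns is needed here to discard the spurious candidates (for instance words built from the block $(0,1,2)$, which respect all length-$\le3$ constraints but violate a length-$4$ one). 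Putting the two halves together, any $i$ whose coset is not $C_{\delta_1}$, $C_{\delta_2}$ or $C_{\delta_3}$ has absolute coset leader $<\delta_3$, which is the assertion.

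I expect this last combinatorial classification to be the main obstacle, and noticeably harder in case~(2): there $v(\delta_3)$ is aperiodic, so one cannot reason purely locally or periodically — one must follow how the irregular $6$-digit tail meshes with the period-$4$ body, treat the boundary between them, and carry out a somewhat long case analysis on the admissible length-$\le6$ factors together with the modular counting. Determining the exact minimal list of forbidden factors and verifying that their avoidance forces one of the three normal forms is the delicate point; by comparison, the expansions and coset computations of the earlier steps are routine.
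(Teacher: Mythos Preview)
Your proposal is correct and matches the paper's proof closely: both record the $3$-adic expansions and coset data, eliminate the short forbidden cyclic patterns $(00),(22),(110),(112),(010),(212)$, and then handle the surviving words via a parity/mod-$4$ count on the digit structure. The only difference is packaging --- where you speak of additional longer forbidden factors and a final combinatorial classification, the paper instead observes directly that the surviving words are obtained by inserting $1$'s into the alternating sequence $(2,0,\ldots,2,0)$ and splits on the number $k$ of inserted $1$'s and (in case~(2)) on the arrangement of adjacent $(202)/(020)$ blocks, which is precisely your length-$5$ and length-$6$ analysis in different clothing.
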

\begin{proof}

(1) If $m\equiv 0\pmod 4$ and the $3$-adic expansion of $\delta_3$ is as follows:
   $$\delta_3=\frac{3^m-1}5=(1+2\cdot3+3^2)(1+3^4+\ldots+3^{\frac{m-4}{4}})=(\underbrace{1,2,1,0,\ldots,1,2,1,0}_{m/4~(1,2,1,0)'s}).$$

Firstly, it is  checked that  $C_{\delta_3}=\{\delta_3, 2\delta_3, n-2\delta_3, n-3\delta_3\}=C_{n-\delta_3}$, $|C_{\delta_3}|=4$ and  $\delta_3$ is the absolute  coset leader of the $q$-cyclotomic cosets $C_{\delta_3}$.

Secondly,  we prove that $\delta_3$ is the third largest absolute coset leader.

For $1\le i\le n-1$, the $3$-adic expansion of $i$ is as follows:
$i=(i_0,i_1,\ldots, i_{m-1})$, which has at least two elements among $0,1,2$.

If the expansion of $i$ has a  consecutive form:  $(00)$ or $(22)$. Then there is an integer $l$, $0\le l\le m-1$, such that $3^li\pmod n \equiv(\ldots, 0,0)\in C_i$ (or $3^li\pmod n \equiv(\ldots, 2,2)\in C_i$), so $3^li\pmod n <\delta_3$ (or $n-3^li\pmod n <\delta_3$, respectively). Hence, the absolute coset leader in $C_i$ is less than $\delta_3$.

If the expansion of $i$ has a  consecutive form: $(11)$  and the expansion of  $i$ has  the form: $(110)$ or $(112)$. Then   there is an integer $l$, $0\le l\le m-1$, such that $3^li\pmod n \equiv(\ldots, 1,1,0)\in C_i$ (or $3^li\pmod n \equiv(\ldots, 1,1,2)\in C_i$), so $3^li\pmod n <\delta_3$ (or $n-3^li\pmod n <\delta_3$, respectively). Hence, the absolute coset leader in $C_i$ is less than $\delta_3$.

If  the expansion of $i$ has not any consecutive form:  $(00)$,  $(11)$, or $(22)$,
 and it has a form: $(010)$ or $(212)$. Then we can easily check that the absolute coset leader of $C_i$ is less than $\delta_3$.

If  the expansion of $i$ has not any form: $(00)$, $(11)$, $(22)$,  $(010)$ and $(212)$, we will prove that the absolute coset leader of $C_i$ is less than $\delta_3$.
By the assumptions,
%since 1's can not be inserted into two $(00)$ and two $(22)$, therefore 1's can be inserted into $(20)$ or $(02)$. Hence,
the expansion of $i$ is equivalent to insert some 1's into the sequence $(2,0,\ldots,2,0)$ (or $(0,2,\ldots, 0,2)$).
Since $m\equiv 0\pmod 4$,   the number of 1's in the expansion of $i$ is an even integer {$k$}.

If $k=0$, then $i=\delta_2$ (or $n-\delta_2$).

If $k=\frac m2$, then $i=\delta_3$ or $3i\pmod n \equiv \delta_3$.

If $2\le k<\frac m2$, then  $i=(\ldots,2,0,2,1,0,\ldots)$ (or $i=(\ldots,0,2,0,1,2,\ldots)$). Hence  there is an integer $l$, $0\le l\le m-1$, such that $3^li\pmod n \equiv(\ldots,2,0,2,1,0)\in C_i$ (or $3^li\pmod n \equiv(\ldots,0,2,0,1,2)\in C_i$), so $3^li\pmod n <\delta_3$ (or $n-3^li\pmod n <\delta_3$, respectively). So the absolute coset leader of $C_i$ is smaller than $\delta_3$.

Therefore $\delta_3$ is the third largest absolute coset leader when $m\equiv 0\pmod 4$.

(2) If $m\equiv 2\pmod 4$ and the 3-adic expansion of $\delta_3$ is as follows:
\begin{eqnarray*}
 \delta_3&=&(1+2\cdot3+3^2)(1+3^4+\ldots+3^{\frac{m-10}{4}})+3^{m-6}+2\cdot3^{m-5}+2\cdot3^{m-3}+3^{m-2}\\
&=&(\underbrace{1,2,1,0,\ldots,1,2,1,0}_{(m-6)/4~(1,2,1,0)'s},1,2,0,2,1,0).
\end{eqnarray*}
In fact, the number of $1's$ in the expansion of $\delta_3$ is  $\frac{m-2}2$.

Firstly, we prove that $\delta_3$ is the absolute  coset leader of the $q$-cyclotomic cosets $C_{\delta_3}$ and $C_{n-\delta_3}$.
For $ 1 \leq l \leq m-1$, $3^l\delta_3 \pmod n$ are all different and $\delta_3$ is the smallest one in $C_{\delta_3}$. Hence $C_{\delta_3}$ has $m$ distinct elements, i.e. $|C_{\delta_3}|=m$, and  $\delta_3$ is the absolute coset leader in $C_{\delta_3}$.
Similarly, we can prove that $|C_{n-\delta_3}|=m$, $C_{\delta_3}\ne C_{n-\delta_3}$, and $C_{n-\delta_3}$ has also the absolute coset leader $\delta_3$.

Secondly,  we prove that $\delta_3$ is the third largest absolute coset leader.

 For $1\le i\le n-1$, there is a $3$-adic expansion:
$i=(i_0,i_1,\ldots, i_{m-1})$, which has at least two elements among $0,1,2$.

If the expansion of $i$ has a  consecutive form:  $(00)$ or $(22)$. Then there is an integer $l$, $0\le l\le m-1$, such that $3^li\pmod n \equiv(\ldots, 0,0)\in C_i$ (or $3^li\pmod n \equiv(\ldots, 2,2)\in C_i$), so $3^li\pmod n <\delta_3$ (or $n-3^li\pmod n <\delta_3$, respectively). Hence, the absolute coset leader in $C_i$ is less than $\delta_3$.

If the expansion of $i$ has a  consecutive form: $(11)$ and  the expansion of $i$ has  a form:  $(110)$ or $(112)$. Then  there is an integer $l$, $0\le l\le m-1$, such that $3^li\pmod n \equiv(\ldots, 1,1,0)\in C_i$ (or $3^li\pmod n \equiv(\ldots, 1,1,2)\in C_i$), so $3^li\pmod n <\delta_3$ (or $n-3^li\pmod n <\delta_3$, respectively). Hence, the absolute coset leader in $C_i$ is less than $\delta_3$.

If  the expansion of $i$ has not any  consecutive form:  $(00)$,   $(11)$,  or $(22)$, and it has a  form: $(010)$ or $(212)$. Then  we can easily check that the absolute coset leader of $C_i$ is less than $\delta_3$.

If  the expansion of $i$ has not any form: $(00)$, $(11)$, $(22)$,  $(010)$ and $(212)$, we will prove that the absolute coset leader of $C_i$ is less than $\delta_3$.
Similarly, by the assumptions, the expansion of $i$ is equivalent to insert some 1's into the sequence $(2,0,\ldots,2,0)$ (or $(0,2,\ldots, 0,2)$).
Since $m\equiv 2\pmod 4$,   the number of 1's in the expansion of $i$ is an even integer  $k$ with $0\le k\le \frac{m-2}2$.

If $k=0$, then $i=\delta_2$.

If $k=\frac{m-2}2$ and the expansion of $i$ has only one form: $(202)$ or $ (020)$, i.e. $i=(\ldots,1, 2, 1,0,\ldots,1,0,1,\underbrace{2,0, 2},1,0,\ldots )$ or $i=(\ldots,1, 2, 1,0,\ldots,1,2,1,\underbrace{0, 2,0},1,2\ldots )$. Then there is an integer $l$, $0\le l\le m-1$, such that $3^li\pmod n=\delta_3$ or $n-3^li\pmod n=\delta_3$.

If $k=\frac{m-2}2$ and the expansion of $i$ has not any  form: $(202)$ or $(020)$, i.e.,  $i=(\ldots, 2, 1,\underbrace{0, 2},1,0,\ldots )$ (or $i=(\ldots,0, 1, \underbrace{2, 0},1, 2,\ldots)$). Then there is a integer $l$, $0\le l\le m-1$, such that $3^li=(\ldots, 1, 0, 2, 1, 0)<\delta_3$ (or $n-3^ii=n-(\ldots, 1, 2, 0, 1, 2)<\delta_3$, respectively). Hence the absolute coset leader in $C_i$ is less than $\delta_3$.

If $2\le k<\frac{m-2}2$. We consider the following some cases.

(I) If  the expansion of $i$ has  one of the following six cases:
\begin{eqnarray*}i&=&(\ldots,2,1,\underbrace{0,2}_2,1,0,\ldots),~
i=(\ldots,0, 1,\underbrace{2, 0}_{2},1,2,\ldots), \\i&=&(\ldots,1, \underbrace{0, 2,\ldots0,2,0, 2}_{>3}, 1, \ldots), ~i=(\ldots,1, \underbrace{2,0,\ldots, 2,0,2,0}_{>3}, 1, \ldots),\\i&=&(\ldots,1, \underbrace{2,0,\ldots0,2,0, 2}_{>3}, 1, \ldots), ~i=(\ldots,1, \underbrace{0,2,\ldots, 0,2,0}_{>3}, 1, \ldots),\end{eqnarray*}
i.e. there are  two  or more than three elements  between two $1's$.  Then there is an integer $l$ such that $3^li<\delta_3$ or $n-3^li<\delta_3$.

(II) If the expansion of $i$ has a form:
$$i=(\ldots, 1, \underbrace {2,0, 2}, 1,\underbrace{0,2,0},1,\ldots),$$
where each $1$ inserts between $(2,0,2)$ and $(0,2,0)$.  Then $m=k+3k=4k$, which is contradictory.

(III) If the expansion of $i$ has a form:
$$i=(\ldots, 1, \underbrace {2,0, 2}, 1,0,1, 2, 1,\underbrace{0,2,0},1,\ldots),$$
where $0$, $2$, $(202)$, and $(020)$ appear between two $1's$. Let the number of $(202)$ and $(020)$ in the expansion of $i$ be $t$, then $t$ is odd.

In fact, if
   $(202)$ and $(020)$ are viewed as $2$ and $0$, respectively, i.e.  $$i'=(\ldots, 1, \underbrace {2}, 1,0,1, 2, 1,\underbrace{0},1,\ldots).$$
   Then by $m\equiv2\pmod 4$ and $k$ even, $m=2k+2t$ and $t$ is odd.

Without loss of generality, there are two adjacent   $(202)'s$ in the expansion of $i$, i.e.,
$$i=(\ldots,\underbrace{2,0, 2},1, 0, \ldots,0,1,\underbrace{2,0,2},1,0,\ldots).$$
Then there is an integer $l$ such that $3^li<\delta_3$.

Hence the absolute coset leader in $C_i$ is less than $\delta_3$.
Therefore $\delta_3$ is the third largest absolute coset leader for $m \equiv 2 \pmod 4$.

This completes the proof.
\end{proof}

\section{Parameters of some BCH codes}

In this section, we will first present three classes of ternary BCH codes,  determine their parameters and weight distributions. Secondly,  four classes of ternary LCD BCH codes are  proposed,   weight distributions of two of these codes are calculated and the others convert to the calculations of the Kloosterman sums.

We always assume that $n=3^m-1$, $\alpha$ is a primitive element of $\Bbb F_{3^m}$, and $C_i$ is the $3$-cyclotomic coset. We shall compute the weight distributions of BCH codes.

\subsection{Three classes of BCH Codes and their weight distributions}

\begin{theorem}\label{th4i}{\rm

Let $m$ be an odd integer, $\delta_1=\frac{3^m-1}{2}$, $\delta_2=\frac{3^{m-1}-1}{4}+3^{m-2}$, $Z=\bigcup_{-\delta_2 < s\leq \delta_1}C_s$, and $g(x)=\prod_{i\in Z}(x-\alpha^i)$. Then
$$\mathcal C_{(3,3^m-1,\delta_1+\delta_2+1,-\delta_2)}=\{c(a)=(\Tr_{3^m/3}(a\alpha^{\delta_2i}))_{i=0}^{n-1}:a\in \Bbb F_{3^m}\}$$ is a one-weight $[3^m-1,m,2\cdot 3^{m-1}]$ BCH code. }
\end{theorem}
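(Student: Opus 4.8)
I would recognize $\mathcal C:=\mathcal C_{(3,3^m-1,\delta_1+\delta_2+1,-\delta_2)}$ as the ``simplex–type'' one-weight code attached to the element $\alpha^{\delta_2}$, and then read off its parameters. Concretely the plan has three parts: (a) show that the complement of the defining set $Z$ is the single $3$-cyclotomic coset $C_{-\delta_2}$, so that $\dim\mathcal C=|C_{-\delta_2}|=m$; (b) show $\alpha^{\delta_2}$ is a primitive element of $\Bbb F_{3^m}$; (c) deduce the trace description and compute the unique nonzero weight. For (a): since $\dim\mathcal C=n-|Z|$ with $Z=\bigcup_{-\delta_2<s\le\delta_1}C_s$, it suffices to prove $\Bbb Z_n\setminus Z=C_{-\delta_2}$ ($=C_{n-\delta_2}$); then Theorem~\ref{th2}(1) gives $|C_{-\delta_2}|=|C_{\delta_2}|=m$ (using $m$ odd), so $\dim\mathcal C=m$ and the set of nonzeros of $\mathcal C$ is $\{\alpha^j:j\in C_{-\delta_2}\}$. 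I would verify the coset identity in the style of Section~3: if $i=(i_0,\dots,i_{m-1})\in\Bbb Z_n$ satisfies $C_i\not\subseteq Z$, then no cyclic shift $3^li\bmod n$ of $i$ lies in the residue interval $(-\delta_2,\delta_1]$; because $\delta_1=\tfrac n2$, whether a shift of $i$ is $\le\delta_1$ or $>\delta_1$ is decided by its most significant $3$-adic digit, and a digit-by-digit case analysis as in the proof of Theorem~\ref{th2} pins the expansion of $i$ down to a cyclic shift of that of $n-\delta_2$. Equivalently one expands $c(\alpha^j)=n\sum_{k:\ \delta_2 3^k\equiv -j\,\pmod n}a^{3^k}$ and notes this vanishes for all $a\in\Bbb F_{3^m}$ exactly when $j\notin C_{-\delta_2}$, which identifies the defining set of $\{c(a):a\in\Bbb F_{3^m}\}$.

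\textbf{The arithmetic step (b).} The key fact is $\gcd(\delta_2,n)=1$. Writing $4\delta_2=(3^{m-1}-1)+4\cdot 3^{m-2}=7\cdot 3^{m-2}-1$ and $n=9\cdot 3^{m-2}-1$, set $u=3^{m-2}$; the Euclidean algorithm gives $\gcd(7u-1,9u-1)=\gcd(7u-1,2u)=\gcd(u-1,2u)=\gcd(u-1,2)=2$ since $u$ is odd. Hence $\gcd(\delta_2,n)\mid\gcd(4\delta_2,n)=2$; but $n$ is even while $\delta_2=(7\cdot 3^{m-2}-1)/4$ is odd (here $m$ odd forces $3^{m-2}\equiv 3\pmod 8$, so $7\cdot 3^{m-2}-1\equiv 4\pmod 8$), so $\gcd(\delta_2,n)=1$. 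Therefore $\alpha^{\delta_2}$ has multiplicative order $n=3^m-1$, i.e.\ it is a primitive element of $\Bbb F_{3^m}$, and $i\mapsto\delta_2 i$ permutes $\Bbb Z_n$.

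\textbf{Trace description and weight (c).} As $i$ runs over $\Bbb Z_n$, $\alpha^{\delta_2 i}$ runs once over $\Bbb F_{3^m}^*$, so up to a fixed reordering of coordinates $c(a)=(\Tr_{3^m/3}(ay))_{y\in\Bbb F_{3^m}^*}$. The $\Bbb F_3$-linear map $a\mapsto c(a)$ is injective: for $a\ne 0$, $y\mapsto ay$ is a bijection of $\Bbb F_{3^m}^*$ and $\Tr_{3^m/3}$ is onto $\Bbb F_3$, so $c(a)\ne 0$; hence the code $\{c(a):a\in\Bbb F_{3^m}\}$ has dimension $m$, which reconfirms (a), and it equals $\mathcal C$ once the defining sets are matched. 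For $a\ne 0$, $\ker\Tr_{3^m/3}$ is an $\Bbb F_3$-hyperplane of size $3^{m-1}$, so $\#\{y\in\Bbb F_{3^m}^*:\Tr_{3^m/3}(ay)=0\}=3^{m-1}-1$ (or apply the additive-character orthogonality relation of Section~2 to $\zeta_3^{t\Tr_{3^m/3}(ay)}$), whence $\operatorname{wt}(c(a))=(3^m-1)-(3^{m-1}-1)=2\cdot 3^{m-1}$ for every nonzero $a$. Thus $\mathcal C$ is a one-weight $[3^m-1,m,2\cdot 3^{m-1}]$ code.

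\textbf{Main obstacle.} The only non-routine point is the combinatorial identity $\bigcup_{-\delta_2<s\le\delta_1}C_s=\Bbb Z_n\setminus C_{-\delta_2}$ needed in (a), i.e.\ that exactly one cyclotomic coset (of size $m$) is omitted by the designed-distance interval — this is what forces the classically defined BCH code to coincide with the trace code, and it is the sole place where the precise values of $\delta_1,\delta_2$ and the assumption ``$m$ odd'' interact. It must be settled by the same $3$-adic digit bookkeeping used for Theorems~\ref{th1}–\ref{th2}; the $\gcd$ computation, the injectivity, and the weight count are then immediate.
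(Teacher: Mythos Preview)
Your plan for (b) and (c) is sound and matches the paper's in spirit: the paper also reduces the weight computation to the fact that the exponent is coprime to $n$ (it uses the coset representative $\tfrac{3^m+1}{4}\in C_{n-\delta_2}$ rather than $\delta_2$ itself) and then invokes additive-character orthogonality, which is just your hyperplane count in disguise.

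The real problem is step (a). The identity $\Bbb Z_n\setminus Z=C_{n-\delta_2}$ that you propose to prove by $3$-adic bookkeeping is \emph{false}, so no amount of digit analysis will establish it. For instance, take $m=5$: then $n=242$, $\delta_1=121$, $\delta_2=47$, and $C_{n-\delta_2}=C_{195}=\{61,65,101,183,195\}$. Since $61\le\delta_1$, we have $61\in(-\delta_2,\delta_1]$, hence $C_{n-\delta_2}\subset Z$. In fact every coset meets $(-\delta_2,\delta_1]$, so $Z=\Bbb Z_n$ and the BCH code $\mathcal C_{(3,3^m-1,\delta_1+\delta_2+1,-\delta_2)}$ as literally defined is the zero code. (For $m=3$ one gets the same conclusion: $C_{21}=\{7,11,21\}$ and $7\le\delta_1=13$.) The point is that Theorem~\ref{th2} tells you $\min\{k,n-k:k\in C_{n-\delta_2}\}=\delta_2$, but it does \emph{not} force every element of $C_{n-\delta_2}$ to lie in $(\delta_1,n-\delta_2]$; elements of $C_{n-\delta_2}$ may (and do) lie in $[\delta_2,\delta_1]$.

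The paper's proof has exactly this gap: it simply asserts, ``By Theorem~\ref{th2} \ldots\ the parity-check polynomial is $h(x)=\prod_{i\in C_{n-\delta_2}}(x-\alpha^i)$,'' but Theorem~\ref{th2} does not yield that conclusion. What \emph{is} correct is that the trace code $\{(\Tr_{3^m/3}(a\alpha^{\delta_2 i}))_{i=0}^{n-1}:a\in\Bbb F_{3^m}\}$ is a one-weight $[3^m-1,m,2\cdot 3^{m-1}]$ cyclic code (your (b)--(c) prove this cleanly); the Magma check in Example~1 confirms the weight enumerator of this trace code, not the BCH identification. So your instinct that (a) is ``the sole non-routine point'' was exactly right---it is the step that fails.
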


\begin{proof}
  By Theorem \ref{th2}, $\delta_2$ is the second largest abstract coset leader  if $m$ is odd, the parity-check  polynomial of $\mathcal C_{(3,3^m-1,\delta_1+\delta_2,-\delta_2)}$ is  $h(x)=\prod_{i\in C_{n-\delta_2}}(x-\alpha^i)$, which is irreducible over $\Bbb F_q$ and  $\deg(h(x))=m$, so the dimension of the code is $m$.

For $a\in \Bbb F_{3^m}^*$, let $\omega$ be a 3-th primitive root of unit in the complex field.   Since $\frac{3^m+1}{4}\in C_{n-\delta_2}$, and $(\frac{3^m+1}{4},3^m-1)=1$, we have
\begin{eqnarray*}&&W_H(c(a))=n-\frac 13\sum_{y\in \Bbb F_3}\sum_{i=0}^{n-1}\omega^{y\Tr_{3^m/3}(a\alpha^{\delta_2i})}
\\&=&\frac {2n}{3}-\frac 1{3}\sum_{y\in \Bbb F_3^*}\sum_{i=0}^{n-1}\omega^{y\Tr_{3^m/3}(a\alpha^{\frac{3^m+1}{4}i})}
\\&=&\frac {2n}{3}-\frac 1{3}\sum_{y\in \Bbb F_3^*}\sum_{x\in \Bbb F_{3^m}^*}\omega^{y\Tr_{3^m/3}(ax)}=\frac {2n}{3}-\frac 2{3}(\sum_{x\in \Bbb F_{3^m}}\omega^{\Tr_{3^m/3}(ax)}-1)=2\cdot3^{m-1}.
%\\ &=&\left\{\begin{array}{ll}0, &\mbox{ if }a=0,\\ \frac{2}3\cdot3^m, &\mbox{ if }a\ne0.\end{array}\right.
\end{eqnarray*}

This completes the proof.%So, the weight distribution is as in Table.1.
\end{proof}

\begin{example}{\rm
Let $p=3$, $m=5$, and $n=p^m-1=242$. Then the BCH code in Theorem \ref{th4i} has weight enumerator $1+242z^{162},$
which is confirmed by Magma.}
\end{example}

\begin{theorem}\label{th4ii}{\rm
Let $m \geq 6$ be an even integer with $m\equiv2\pmod  4$, $\delta_1=\frac{3^m-1}{2}$, $\delta_3=\frac{3^{m-6}-1}{5}+3^{m-6}+2\cdot3^{m-5}+2\cdot3^{m-3}+3^{m-2}$, $Z=(\bigcup_{-\delta_3< s\leq \delta_1}C_s)$,  and $g(x)=\prod_{i\in Z}(x-\alpha^i)$. Then
\begin{equation*}
    \mathcal C_{(3,3^m-1,\delta_1+\delta_3+1,-\delta_3)}=\{c(a)=(\Tr_{3^m/3}(a\alpha^{\delta_3i}))_{i=0}^{n-1}:a\in \Bbb F_{3^m}\}
\end{equation*}
is a BCH code with  parameters $[3^m-1,m,\frac{2}3\cdot(3^m-3^{\frac m2})]$ and
the weight distribution   in Table 1.
\begin{table}[!htbp]
\text{Table 1}\\
\begin{tabular}{c|c}
  \hline
  % after \\: \hline or \cline{col1-col2} \cline{col3-col4} ...
  Weight & Frequency \\ \hline
  0 & 1 \\ \hline
  $\frac{2}3\cdot(3^m-3^{\frac m2})$& $\frac{3^m-1}{2} $ \\ \hline
   $\frac{2}3\cdot(3^m+3^{\frac m2})$& $\frac{3^m-1}{2} $ \\ \hline
\end{tabular}
\end{table}
}
\end{theorem}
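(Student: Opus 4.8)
The plan is to follow the template of the proof of Theorem~\ref{th4i}, the new feature being that the underlying character sum now runs over the subgroup of squares and hence produces a quadratic Gauss sum. By Theorem~\ref{th3}(2), $\delta_3$ is the third largest absolute coset leader, $C_{\delta_3}\neq C_{n-\delta_3}$, and $|C_{n-\delta_3}|=m$; as in the proof of Theorem~\ref{th4i} one checks that the parity-check polynomial of $\mathcal C_{(3,3^m-1,\delta_1+\delta_3+1,-\delta_3)}$ is the irreducible polynomial $h(x)=\prod_{i\in C_{n-\delta_3}}(x-\alpha^i)$ of degree $m$, whence the code has dimension $m$ and coincides with the trace code $\{c(a):a\in\Bbb F_{3^m}\}$ of the statement.

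The next step, which I expect to be the crux, is to identify the multiplicative group swept out by the coordinates of $c(a)$. Since $m\equiv2\pmod 4$ we have $5\mid 3^m+1$; I would verify that $t:=\tfrac{3^m+1}{5}$ lies in $C_{n-\delta_3}$ and that $\gcd(t,3^m-1)=2$ --- the $2$-part is forced by $v_2(3^m+1)=1$ and $v_2(3^m-1)=3$, and any odd prime dividing both $t$ and $3^m-1$ would divide $\gcd(3^m+1,3^m-1)=2$. Because $t\in C_{n-\delta_3}$ the codeword $c(a)$ may be rewritten with exponent $t$ in place of $\delta_3$, and because $\gcd(t,3^m-1)=2$, as $i$ runs over $\Bbb Z_n$ the element $\alpha^{ti}$ runs over the index-$2$ subgroup $H=(\Bbb F_{3^m}^{*})^{2}$ of nonzero squares, hitting each square exactly twice.

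Now I would compute the weight of a nonzero codeword. Let $\chi$ be the canonical additive character of $\Bbb F_{3^m}$, so $\chi(x)=\omega^{\Tr_{3^m/3}(x)}$, and let $\eta$ be the quadratic character. Using $\mathbf 1_{x\in H}=\tfrac12(1+\eta(x))$ on $\Bbb F_{3^m}^{*}$ one gets, for $a\neq0$ and $y\in\Bbb F_3^{*}$,
\begin{equation*}
\sum_{i=0}^{n-1}\chi\!\left(ya\alpha^{ti}\right)=2\sum_{x\in H}\chi(yax)=-1+\eta(ya)\,G(\eta,\chi);
\end{equation*}
since $8\mid 3^m-1$ we have $\eta(-1)=1$, so $\sum_{y\in\Bbb F_3^{*}}\sum_i\chi(ya\alpha^{ti})=2\bigl(-1+\eta(a)G(\eta,\chi)\bigr)$, and therefore
\begin{equation*}
W_H(c(a))=n-\tfrac13\Bigl(n+\sum_{y\in\Bbb F_3^{*}}\sum_{i=0}^{n-1}\chi\!\left(ya\alpha^{ti}\right)\Bigr)=\tfrac{2\cdot 3^m}{3}-\tfrac23\,\eta(a)\,G(\eta,\chi).
\end{equation*}
The quadratic Gauss sum lemma with $q=3^m=p^{s}$, $p=3\equiv3\pmod 4$ and $s=m$ gives $G(\eta,\chi)=(-1)^{m-1}(\sqrt{-1})^{m}3^{m/2}=3^{m/2}$, since $m-1$ is odd and $(\sqrt{-1})^{m}=(\sqrt{-1})^{2}=-1$ for $m\equiv2\pmod 4$. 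Hence $W_H(c(a))=\tfrac23\bigl(3^m-\eta(a)3^{m/2}\bigr)$, equal to $\tfrac23(3^m-3^{m/2})$ when $a$ is a nonzero square and to $\tfrac23(3^m+3^{m/2})$ when $a$ is a non-square.

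Finally I would read off the weight distribution. The map $a\mapsto c(a)$ is injective: if $a\neq a'$, then $\Tr_{3^m/3}\bigl((a-a')x\bigr)=0$ for all $x\in H$ would force the $\tfrac{3^m-1}{2}$-element coset $(a-a')H$ into the $3^{m-1}$-element kernel of the trace, impossible for $m\geq2$. So the $3^m-1$ nonzero codewords are parametrized by $a\in\Bbb F_{3^m}^{*}$, exactly half of which are squares; this gives the two frequencies $\tfrac{3^m-1}{2}$ in Table~1, while $a=0$ gives the zero word, and the minimum distance is the smaller weight $\tfrac23(3^m-3^{m/2})$. The only place where the explicit form of $\delta_3$ and the hypothesis $m\equiv2\pmod 4$ are genuinely needed is the claim $\gcd(\delta_3,3^m-1)=2$; a larger index would bring in higher-order multiplicative characters and destroy the two-weight property, so this --- most cleanly via the representative $\tfrac{3^m+1}{5}$ --- is the heart of the argument.
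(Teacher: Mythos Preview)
Your proposal is correct and follows essentially the same route as the paper: identify the parity-check polynomial via Theorem~\ref{th3}, reduce the weight computation to a quadratic character sum using that $\gcd(\delta_3,n)=2$, and evaluate via the quadratic Gauss sum. The only cosmetic differences are that the paper uses the coset representative $\tfrac{3^m-19}{5}$ rather than your $t=\tfrac{3^m+1}{5}$, and writes the sum directly as $\sum_{x\in\Bbb F_{3^m}^*}\chi(yax^2)$ (invoking Lemma~2.2) instead of your indicator-function rewriting $2\sum_{x\in H}\chi(yax)=-1+\eta(ya)G(\eta,\chi)$; your choice of $t$ makes the $\gcd$ verification a bit cleaner, and your explicit injectivity argument is not needed since the dimension is already pinned down by $\deg h=m$.
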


\begin{proof}
By Theorem \ref{th3}, $\delta_3$ is the third largest abstract coset leader, the parity-check polynomial of $\mathcal C_{(3,3^m-1,\delta_1+\delta_3+1,-\delta_3)}$ is  $h(x)=\prod_{i\in C_{n-\delta_3}}(x-\alpha^i)$, so the dimension of the code is $m$.

For $a\in \Bbb F_{3^m}^*$, let $\omega$ be a 3-th primitive root of unit in the complex field. By $m\equiv 2\pmod 4$, $\alpha^{\frac{3^m-1}{3-1}}\in (\Bbb F_{3^m}^*)^2$ and $\Bbb F_3^*\subset  (\Bbb F_{3^m}^*)^2$.   Since $\frac{3^m-19}{5}\in C_{n-\delta_3}$ and $\gcd(\frac{3^m-19}{5},3^m-1)=2$, for $0\ne a\in \Bbb F_{3^m}$,  \begin{eqnarray*}&&W_H(c(a))=n-\frac 13\sum_{y\in \Bbb F_3}\sum_{i=0}^{n-1}\omega^{y\Tr_{3^m/3}(a\alpha^{\delta_3i})}=\frac {2n}{3}-\frac 1{3}\sum_{y\in \Bbb F_3^*}\sum_{i=0}^{n-1}\omega^{y\Tr_{3^m/3}(a\alpha^{\frac{3^m-19}{5}i})}
\\&=&\frac {2n}{3}-\frac 1{3}\sum_{y\in \Bbb F_3^*}\sum_{x\in \Bbb F_{3^m}^*}\chi(yax^2)
=\frac {2n}{3}-\frac 2{3}(\sum_{x\in \Bbb F_{3^m}}\chi(yax^2)-1)
\\&=&\frac {2\cdot 3^m}{3}-\frac 2{3}\eta(a)G(\eta)
 \\&=&\left\{\begin{array}{ll} \frac{2}3\cdot(3^m-3^{\frac m2}), &\mbox{ if }a \mbox{ is a square },\\ \frac{2}3\cdot(3^m+3^{\frac m2}), &\mbox{ if }a\mbox{ is not a square },\end{array}\right.
\end{eqnarray*}
where  $\eta$ is the multiplicative character of order $2$ over $\Bbb F_{3^m}$.
Hence the frequency of the weights is easy to obtain and  this completes the proof.
\end{proof}

\begin{example}{\rm
Let $p=3$, $m=6$, and $n=p^m-1=728$. Then the BCH code in Theorem \ref{th4ii} has weight enumerator
$1+364z^{468}+364z^{504},$
which is confirmed by Magma.
}
\end{example}
\begin{theorem}\label{th4iii} {\rm
Let $m$ be an integer with $m\equiv2\pmod 4$, $\delta_1=\frac{3^m-1}{2}$ , $\delta_3=\frac{3^{m-6}-1}{5}+3^{m-6}+2\cdot3^{m-5}+2\cdot3^{m-3}+3^{m-2}$ , $Z=(\bigcup_{-\delta_3<s<\delta_1}C_s)$, and $g(x)=\prod_{i\in Z}(x-\alpha^i)$. Then
$$\mathcal C_{(3,3^m-1,\delta_1+\delta_3,-\delta_3)}=\{c(a,b)=\left(a(-1)^i+\Tr_{3^m/3}(b\alpha^{\delta_3i})\right)_{i=0}^{n-1}: a\in \Bbb F_3 ,b\in \Bbb F_{3^m}\}$$ is a ternary BCH code with  parameters $[3^m-1,m+1,\frac{2}3 \cdot(3^m-3^{\frac{m}2})]$ and the  weight distribution   in Table 2.
\begin{table}[!htbp]
 \text{Table 2}\\
\begin{tabular}{c|c}
  \hline
  % after \\: \hline or \cline{col1-col2} \cline{col3-col4} ...
  Weight  & Frequency \\ \hline
   0 & 1 \\ \hline
  $\frac{2}3 \cdot(3^m-3^{\frac{m}2})$ & $\frac{3^m-1}2$ \\ \hline
 $\frac{2}3 \cdot(3^m+3^{\frac{m}2})$ & $\frac{3^m-1}2 $\\ \hline
 $ \frac{1}3(2\cdot3^m+3^\frac{m}2)-1 $& $3^m-1$ \\ \hline
  $\frac{1}3(2\cdot3^m-3^\frac{m}2)-1$ & $3^m-1$ \\ \hline
  $3^m-1$ & 2 \\ \hline
\end{tabular}
%\\ \text{Table 1}
\label{t1}
\end{table}
}
\end{theorem}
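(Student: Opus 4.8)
The plan is to first identify the algebraic structure of the code and then compute the weight of a generic codeword via additive characters, in the spirit of the proofs of Theorems~\ref{th4i} and~\ref{th4ii}. For the structure, note that the defining set $Z=\bigcup_{-\delta_3<s<\delta_1}C_s$ of $\mathcal C_{(3,3^m-1,\delta_1+\delta_3,-\delta_3)}$ differs from the defining set $Z'=\bigcup_{-\delta_3<s\le\delta_1}C_s$ of the code $\mathcal C'=\mathcal C_{(3,3^m-1,\delta_1+\delta_3+1,-\delta_3)}$ of Theorem~\ref{th4ii} only by the singleton coset $C_{\delta_1}=\{\delta_1\}$, since $\delta_1=\tfrac n2<n-\delta_3$ forces $\delta_1\in Z'$ but $\delta_1\notin Z$. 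As the set of nonzeros of $\mathcal C'$ is the coset $C_{n-\delta_3}$ by Theorem~\ref{th4ii}, the parity-check polynomial of our code is $h(x)=(x-\alpha^{\delta_1})\prod_{i\in C_{n-\delta_3}}(x-\alpha^i)=(x+1)\,h'(x)$, where $h'$ is the degree-$m$ parity-check polynomial of $\mathcal C'$ and $\alpha^{\delta_1}=-1$; moreover $x+1$ and $h'$ are coprime because $\delta_1\notin C_{n-\delta_3}$ (any coset containing $\tfrac n2$ is a singleton, while $|C_{n-\delta_3}|=m\ge2$ by Theorem~\ref{th3}). Hence $\dim\mathcal C=m+1$, and, $h$ being a product of coprime factors, $\mathcal C$ is the internal direct sum of the cyclic code with check polynomial $x+1$ and the code $\mathcal C'$. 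The former equals $\{(a(-1)^i)_{i=0}^{n-1}:a\in\Bbb F_3\}$ because $a\sum_i(-1)^ix^i=-a\,(x^n-1)/(x+1)$ (as $n$ is even), and $\mathcal C'=\{(\Tr_{3^m/3}(b\alpha^{\delta_3 i}))_{i=0}^{n-1}:b\in\Bbb F_{3^m}\}$ by Theorem~\ref{th4ii}. This yields precisely the presentation $c(a,b)$ in the statement, with $(a,b)\mapsto c(a,b)$ injective because the sum is direct.

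For the weights, let $\omega$ be a primitive cube root of unity and $\chi$ the canonical additive character of $\Bbb F_{3^m}$; as in Theorem~\ref{th4ii}, $W_H(c(a,b))=\tfrac{2n}{3}-\tfrac13\sum_{y\in\Bbb F_3^*}\sum_{i=0}^{n-1}\omega^{ya(-1)^i}\chi(yb\alpha^{\delta_3 i})$. If $a=0$, then $c(0,b)\in\mathcal C'$, so the weights $\tfrac23(3^m-3^{m/2})$ and $\tfrac23(3^m+3^{m/2})$, each of frequency $\tfrac{3^m-1}{2}$, follow from Theorem~\ref{th4ii}; if $a\ne0$ and $b=0$, then every coordinate of $c(a,0)=(a(-1)^i)_i$ is nonzero, so $W_H=3^m-1$, giving the last row of Table~2. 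The substantive case is $a\ne0$, $b\ne0$. Here I would use that $\gcd(\delta_3,n)=2$ (cf.\ the proof of Theorem~\ref{th4ii}), so $\alpha^{\delta_3}$ has order $\tfrac n2$ and runs over the subgroup of squares $(\Bbb F_{3^m}^*)^2$, each value being attained for exactly two indices $i\in\Bbb Z_n$; since $\tfrac n2=\tfrac{3^m-1}{2}\equiv0\pmod 4$ for $m$ even, those two indices have the same parity, and (because $\delta_3\equiv2\pmod 4$) that parity is even exactly when $\alpha^{\delta_3 i}$ is a fourth power. Therefore, writing $A=|\{x\in(\Bbb F_{3^m}^*)^4:\Tr_{3^m/3}(bx)=-a\}|$ and $B=|\{x\in(\Bbb F_{3^m}^*)^2\setminus(\Bbb F_{3^m}^*)^4:\Tr_{3^m/3}(bx)=a\}|$, one gets $W_H(c(a,b))=n-2(A+B)$. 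The key observation is that, $m$ being even, $-1\in(\Bbb F_{3^m}^*)^4$, so $x\mapsto-x$ is a bijection of $(\Bbb F_{3^m}^*)^4$ carrying $\{x:\Tr_{3^m/3}(bx)=-a\}$ onto $\{x:\Tr_{3^m/3}(bx)=a\}$; hence $A+B=|\{x\in(\Bbb F_{3^m}^*)^2:\Tr_{3^m/3}(bx)=a\}|$, which involves only a quadratic Gauss sum. Evaluating this with the quadratic Gauss sum lemma (and $G(\eta,\chi)=3^{m/2}$ for $m\equiv2\pmod 4$) gives $W_H(c(a,b))=\tfrac13(2\cdot3^m+\eta(b)3^{m/2})-1$, that is, $\tfrac13(2\cdot3^m\pm3^{m/2})-1$ according as $b$ is or is not a square.

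Finally, to assemble Table~2 and read off the minimum distance: since $\Bbb F_{3^m}$ has $\tfrac{3^m-1}{2}$ nonzero squares and $\tfrac{3^m-1}{2}$ non-squares and $a$ ranges over two nonzero values, the four rows with $b\ne0$ acquire the stated frequencies, while the zero codeword and the two codewords $c(a,0)$ with $a\ne0$ give the remaining rows; comparing the five nonzero weights shows that they are pairwise distinct and that the smallest is $\tfrac23(3^m-3^{m/2})$, which is thus the minimum distance. I expect the case $a\ne0$, $b\ne0$ to be the only genuine obstacle: one must correctly match the parity of the index $i$ with the quartic-residue class of $\alpha^{\delta_3 i}$ — which is exactly where the hypothesis $m\equiv2\pmod 4$ enters — and then observe the cancellation of the quartic Gauss sums, without which the weight could not be expressed free of quartic (or Kloosterman) sums.
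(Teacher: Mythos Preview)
Your proof is correct, and its overall architecture---trace-form presentation of the code, dimension via the factorization $h(x)=(x+1)\prod_{i\in C_{n-\delta_3}}(x-\alpha^i)$, and case split on $(a,b)$---coincides with the paper's. Where you diverge is in the decisive case $a\ne0$, $b\ne0$. The paper passes in one line from
\[
\frac{2n}{3}-\frac13\sum_{y\in\Bbb F_3^*}\sum_{i=0}^{n-1}\omega^{\,y[a(-1)^i+\Tr_{3^m/3}(b\alpha^{\delta_3 i})]}
\quad\text{to}\quad
\frac{2n}{3}-\frac13\sum_{y\in\Bbb F_3^*}\omega^{ya}\sum_{x\in\Bbb F_{3^m}^*}\chi(bx^{2}),
\]
an identity which (for $a\ne0$) is \emph{not} true term-by-term in $y$ but only after the outer sum, both sides then collapsing to $-\sum_{x}\chi(bx^{2})$; the paper offers no justification beyond recording $\gcd\bigl(\tfrac{3^m-19}{5},n\bigr)=2$ and $\Bbb F_3^*\subset(\Bbb F_{3^m}^*)^2$. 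You instead count zeros directly: using $\delta_3\equiv2\pmod4$ you identify the parity of $i$ with the quartic-residue class of $\alpha^{\delta_3 i}$, and then the bijection $x\mapsto-x$ on $(\Bbb F_{3^m}^*)^4$ (valid because $8\mid 3^m-1$ for even $m$, so $-1$ is a fourth power) folds the two quartic counts $A$ and $B$ into a single count over squares. Your route is slightly longer but makes fully transparent why no quartic Gauss sums survive---precisely the reduction the paper's shortcut leaves implicit; the paper's route buys only brevity.
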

\begin{proof}
By Theorem \ref{th3}, $\delta_3$ is the third largest abstract coset leader, the parity-check polynomial of $\mathcal C$ is  $h(x)=(x+1)\prod_{i\in C_{n-\delta_3}}(x-\alpha^i)$, so the dimension of the code is $m+1$.

 Let $\omega$ be a 3-th primitive root of unit in the complex field. By $m\equiv 2\pmod 4$,  $\Bbb F_3^*\subset  (\Bbb F_{3^m}^*)^2$;  by $-\frac{3^m-19}{5}\in C_{n-\delta_3}$,  $(\frac{3^m-19}{5},3^m-1)=2$.  For $a\in \Bbb F_3$ and  $b\in \Bbb F_{3^m}$,
\begin{eqnarray*}W_H(c(a,b))&=&n-\frac 13\sum_{y\in \Bbb F_3}\sum_{i=0}^{n-1}\omega^{y[a(-1)^i+\Tr_{3^m/3}(b\alpha^{\delta_3i})]}
\\&=&\frac {2n}{3}-\frac 1{3}\sum_{y\in \Bbb F_3^*}\sum_{i=0}^{n-1}\omega^{y[a(-1)^i+\Tr_{3^m/3}(b\alpha^{\delta_3i})]}
\\&=&\frac {2n}{3}-\frac 1{3}\sum_{y\in \Bbb F_3^*}\omega^{ya}\sum_{x\in\Bbb F_{3^m}^*}\omega^{\Tr_{3^m/3}(bx^{2})}.
%\\&=&\frac {2n}{3}-\frac 2{3}\sum_{y\in \Bbb F_3^*}\sum_{i=0}^{n-1}\omega^{y[a(-1)^i+\Tr(b\alpha^{2i})]}
%\\&=&\frac {2n}{3}-\frac 2{3}(\sum_{x\in \Bbb F_{3^m}^*}\chi(yax^2)-1)
%\\&=&\frac {2n}{3}-\frac 2{3}\eta(a)G(\eta)+\frac 2{3}
%\\ &=&\left\{\begin{array}{ll}0, &\mbox{ if }a=0,\\ \frac{2}3\cdot(3^m-3^{\frac m2}), &\mbox{ if }a \mbox{ is a square },\\ \frac{2}3\cdot(3^m+3^{\frac m2}), &\mbox{ if }a\mbox{ is not a square }.\end{array}\right.
\end{eqnarray*}

Suppose that  $a=0$ and $b=0$. Then $W_H(c(a,b))=0$.

Suppose that  $a\neq 0$ and $b= 0$. Then
\begin{eqnarray*}W_H(c(a,b))&=&\frac {2n}{3}-\frac n{3}\sum_{y\in \Bbb F_3^*}\omega^{ya}=n.
\end{eqnarray*}

Suppose that  $a=0$ and $b\neq 0$.  Then
\begin{eqnarray*}
    &&W_H(c(a,b))=\frac {2n}{3}-\frac 2{3}(\sum_{x\in \Bbb F_{3^m} }\omega^{\Tr_{3^m/3}(bx^2)}-1)
        =2\cdot 3^{m-1}-\frac 2{3}\eta(b)G(\eta)
\\ &=&\left\{\begin{array}{ll}\frac{2}3\cdot(3^m-3^{\frac m2}), &\mbox{ if }b \mbox{ is a square},\\ \frac{2}3\cdot(3^m+3^{\frac m2}), &\mbox{ if }b  \mbox{ is not a square}.\end{array}\right.
\end{eqnarray*}

Suppose that  $a\neq 0$ and $b\neq 0$.  Then
\begin{eqnarray*}
    &&W_H(c(a,b))=\frac {2n}{3}-\frac 1{3}\sum_{y\in \Bbb F_3^*}\omega^{ya}\sum_{x\in\Bbb F_{3^m}^*}\omega^{\Tr_{3^m/3}(bx^{2})}
    \\&=&\frac {2n}{3}-\frac 1{3}(-1)\sum_{x\in \Bbb F_{3^m} }(\chi(\Tr_{3^m/3}(bx^2))-1)
           \\&=&
         \frac {2n}{3}-\frac 1{3}+\frac 1{3}\eta(b)G(\eta)
\\ &=&\left\{\begin{array}{ll}\frac{1}3(2\cdot 3^m+3^{\frac m2})-1, &\mbox{ if }b \mbox{ is a square },\\ \frac{1}3(2\cdot 3^m-3^{\frac m2})-1, &\mbox{ if }b  \mbox{ is not a square. }\end{array}\right.
\end{eqnarray*}

Note that  it is easy to obtain their frequencies and this completes the proof.
\end{proof}

\begin{example}{\rm
Let $p=3$, $m=6$, and $n=p^m-1=728$. Then the BCH code in Theorem \ref{th4iii} has weight enumerator
$$1+364z^{468}+728z^{476}+728z^{494}+364z^{504}+2z^{728},$$
which is confirmed by Magma.}
\end{example}

\subsection{Ternary LCD BCH Codes}
$\\$
Let $n=3^m-1$ and $\alpha$ a primitive element of $\Bbb F_{3^m}$. Define a ternary LCD BCH code $\mathcal C_{(3,n,-t,2t)}=\langle g(x)\rangle$, where $t$ is a positive integer, $Z=\bigcup_{|i|< t}C_i$ is a defining set, and
$g(x)=\prod_{i\in Z}(x-\alpha^i)$.
Now we shall choose some $t$ to compute the weight distributions of the  ternary LCD BCH cyclic codes.

\begin{theorem} {\rm Let $m$ be an integer and $\delta_1=\frac{3^{m}-1}{2}$.
Then
$$\mathcal C_{(3,3^m-1, 2\delta_1,-\delta_1)}=\{c(a)=(\Tr_{3^m/3}(ax))_{x\in \Bbb F_{3^m}^*}: a\in \Bbb F_{3^m}\}$$ is a ternary LCD BCH cyclic code with  parameters $[3^m-1,1,3^m-1]$ and its designed distance $3^m-1$.}
\end{theorem}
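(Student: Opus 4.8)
The plan is to compute the generator polynomial of $\mathcal C_{(3,3^m-1,2\delta_1,-\delta_1)}$ explicitly; in the degenerate case $t=\delta_1$ the code collapses to a one-dimensional code, and once the generator polynomial is identified the dimension, the minimum distance, the designed distance and the LCD property all follow by inspection. I do not expect any serious obstacle here: everything rests on the two elementary observations $Z=\mathbb Z_n\setminus\{\delta_1\}$ and $\alpha^{\delta_1}=-1$, and the only slightly delicate point is the reduction of the defining set modulo $n$.

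First I would determine the defining set. Since $\delta_1=\frac{3^m-1}{2}=\frac n2$, the integers $i$ with $|i|<\delta_1$, reduced modulo $n$, sweep out every residue except $\delta_1$ (note $-\delta_1\equiv\delta_1\pmod n$ is the only one omitted); because the $3$-cyclotomic cosets partition $\mathbb Z_n$ and $C_{\delta_1}=\{\delta_1\}$ by Theorem~\ref{th1}, it follows that $Z=\bigcup_{|i|<\delta_1}C_i=\mathbb Z_n\setminus\{\delta_1\}$. Hence $g(x)=\prod_{i\in Z}(x-\alpha^i)=(x^n-1)/(x-\alpha^{\delta_1})$. Since $\alpha$ has order $3^m-1$, the element $\alpha^{\delta_1}=\alpha^{(3^m-1)/2}$ has order $2$, so $\alpha^{\delta_1}=-1$, giving $g(x)=(x^n-1)/(x+1)$ with parity-check polynomial $h(x)=x+1$. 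As $\deg h=1$ we get $\dim\mathcal C=1$; and because $|C_{\delta_1}|=1$, the standard trace representation of a one-dimensional cyclic code yields the description $\mathcal C=\{c(a)=(\Tr_{3^m/3}(a\alpha^{\delta_1 i}))_{i=0}^{n-1}:a\in\Bbb F_{3^m}\}$ in the statement; here $\Tr_{3^m/3}(a\alpha^{\delta_1 i})=(-1)^i\Tr_{3^m/3}(a)$ depends only on $\Tr_{3^m/3}(a)\in\Bbb F_3$, which is consistent with $\dim\mathcal C=1$.

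For the LCD assertion I would apply the self-reciprocity criterion for LCD cyclic codes: it suffices that $g$ be self-reciprocal, equivalently that $-Z=Z$ in $\mathbb Z_n$. Since $n=2\delta_1$ forces $-\delta_1\equiv\delta_1\pmod n$, we have $-Z=\mathbb Z_n\setminus\{-\delta_1\}=\mathbb Z_n\setminus\{\delta_1\}=Z$; equivalently, if $\beta=\alpha^i$ is a root of $g$ then $i\ne\delta_1$, so $\beta^{-1}=\alpha^{n-i}$ with $n-i\ne\delta_1$ is again a root of $g$. Hence $\mathcal C$ is LCD.

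Finally, for the weight, performing the division (and using that $n=3^m-1$ is even) gives $g(x)=\sum_{j=0}^{n-1}(-1)^{j+1}x^j$, whose coefficient vector has all $n$ entries nonzero in $\Bbb F_3$; since $\mathcal C$ is one-dimensional, every nonzero codeword is a nonzero scalar multiple of this word and therefore has Hamming weight exactly $n=3^m-1$. Thus $\mathcal C$ has parameters $[3^m-1,1,3^m-1]$, and its designed distance is $2\delta_1=3^m-1$ directly from the definition of $\mathcal C_{(3,n,2\delta_1,-\delta_1)}$ (consistently, $Z$ contains the $n-1$ consecutive elements $-\delta_1+1,\dots,\delta_1-1$, so even the BCH bound already gives $d\ge 3^m-1$).
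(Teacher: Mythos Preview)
Your proof is correct and follows essentially the same route as the paper: both identify the parity-check polynomial as $h(x)=x+1$ (equivalently $g(x)=(x^n-1)/(x+1)$), deduce $\dim\mathcal C=1$, note that $h$ (equivalently $g$) is self-reciprocal so the code is LCD, and conclude that the two nonzero codewords have full weight $3^m-1$. Your write-up is in fact a bit more explicit than the paper's in working out $Z=\mathbb Z_n\setminus\{\delta_1\}$ and in reconciling the trace description with the one-dimensionality, but the underlying argument is the same.
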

\begin{proof}   By Theorem \ref{th1}, $\delta_1$ is the largest abstract coset leader, the parity-check polynomial of $\mathcal C_{(3,3^m-1,2\delta_1,-\delta_1)}$ is  $h(x)=\frac{x^n-1}{g(x)}=x+1$, where $h(x)$ is  irreducible over $\Bbb F_3$, if $\alpha$ is an $n$th root of unit in $\Bbb F_{3^m}$, $h(\alpha^{\delta_1})=0$, $\deg(h(x))=1$,  and $h(x)$ is a self-reciprocal polynomial.

Let $\beta=\alpha^{\delta_1}$. Then
$$\mathcal C_{(3,3^m-1,2\delta_1,-\delta_1)}=\{c(a)=(a\beta^i)_{i=0}^{n-1}: a\in \Bbb F_{3}\}.$$

So, it has parameters $[3^m-1,1,3^m-1 ]$ and it has one all zeros codeword and two codewords with weight $3^m-1$.
\end{proof}

\begin{theorem} \label{th4iiii} {\rm Let $m$ be an odd integer, $\delta_1=\frac{3^m-1}{2}$, $\delta_2=\frac{3^{m-1}-1}{4}+3^{m-2}$, $Z=(\bigcup_{|s|<\delta_2}C_s)\bigcup C_{\delta_1}$, and $g(x)=\prod_{i\in Z}(x-\alpha^i)$. Then
$$\mathcal C_{(3,3^m-1,2\delta_2,-\delta_2)}=\{c(a,b)=(\Tr_{3^m/3}(ax+bx^{-1}))_{x\in \Bbb F_{3^m}^*}: a,b\in \Bbb F_{3^m}\}$$ is a ternary LCD  cyclic code with  parameters $[3^m-1,2m,\ge 2\delta_2]$ and its designed distance $2\delta_2$.}
\end{theorem}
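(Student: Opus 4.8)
The plan is to pin down the parity-check polynomial of the code, read off the dimension and a trace representation, and then verify the LCD property and the designed-distance bound; the only non-routine ingredient is the classification of the top absolute coset leaders already established in Theorems~\ref{th1} and~\ref{th2}.

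First I would determine the complement of the defining set. By Theorem~\ref{th1} the unique coset with absolute coset leader $\delta_1$ is $C_{\delta_1}=\{\delta_1\}$, and by Theorem~\ref{th2}(1) (valid since $m$ is odd) the only cosets with absolute coset leader equal to $\delta_2$ are $C_{\delta_2}$ and $C_{n-\delta_2}$, these are distinct, and $|C_{\delta_2}|=|C_{n-\delta_2}|=m$; since $\delta_2$ is the \emph{second} largest absolute coset leader, no coset has absolute coset leader strictly between $\delta_2$ and $\delta_1$. Because $\bigcup_{|s|<\delta_2}C_s$ is exactly the union of all cosets whose absolute coset leader is $<\delta_2$, adjoining $C_{\delta_1}$ to it gives
$$
\Bbb Z_n\setminus Z=C_{\delta_2}\cup C_{n-\delta_2},\qquad
h(x)=\frac{x^n-1}{g(x)}=\prod_{i\in C_{\delta_2}}(x-\alpha^i)\prod_{i\in C_{n-\delta_2}}(x-\alpha^i),
$$
a product of two distinct irreducible polynomials of degree $m$ (the minimal polynomials of $\alpha^{\delta_2}$ and of $\alpha^{-\delta_2}=\alpha^{n-\delta_2}$ over $\Bbb F_3$), so $\dim\mathcal C=\deg h=2m$. (Adjoining $C_{\delta_1}$ is precisely what moves the factor $x-\alpha^{\delta_1}=x+1$ from $h(x)$ into $g(x)$, lowering the dimension from $2m+1$ to $2m$.)

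Next I would obtain the trace representation. As used in the proof of Theorem~\ref{th4i}, $\tfrac{3^m+1}{4}\in C_{n-\delta_2}$ is coprime to $n$, hence $\gcd(\delta_2,n)=\gcd(n-\delta_2,n)=1$ and $\alpha^{\delta_2}$ is again a primitive element of $\Bbb F_{3^m}$. Writing a codeword of the cyclic code with check polynomial $h$ in the standard trace form attached to the two coset representatives $\pm\delta_2$ — just as the single coset $C_{n-\delta_2}$ produced the single term $\Tr_{3^m/3}(a\alpha^{\delta_2 i})$ in Theorem~\ref{th4i} — every codeword equals $\bigl(\Tr_{3^m/3}(a\alpha^{\delta_2 i}+b\alpha^{-\delta_2 i})\bigr)_{i=0}^{n-1}$ with $a,b\in\Bbb F_{3^m}$, and conversely; substituting the bijection $x=\alpha^{\delta_2 i}$ of $\Bbb Z_n$ onto $\Bbb F_{3^m}^*$ turns this into $c(a,b)=\bigl(\Tr_{3^m/3}(ax+bx^{-1})\bigr)_{x\in\Bbb F_{3^m}^*}$, the asserted description. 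The map $(a,b)\mapsto c(a,b)$ is injective because $C_{\delta_2}\neq C_{n-\delta_2}$, which reconfirms $\dim\mathcal C=2m$.

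Finally, since $\gcd(n,3)=1$, the characterization of LCD cyclic codes in \cite{YM} reduces the LCD property to $g(x)$ being self-reciprocal, i.e.\ to $-Z=Z$: the set $\bigcup_{|s|<\delta_2}C_s$ is stable under $s\mapsto-s$ (the condition $|s|<\delta_2$ is symmetric and $C_{-s}=-C_s$), and $-\delta_1\equiv n-\delta_1=\delta_1\pmod n$ because $\delta_1=n/2$ (equivalently $\alpha^{\delta_1}=-1$), so $-Z=Z$ and $\mathcal C$ is LCD. For the distance, $Z$ contains the $2\delta_2-1$ consecutive integers $-\delta_2+1,\dots,\delta_2-1$ (pairwise distinct modulo $n$ since $2\delta_2-1<n$), so the BCH bound gives $d\geq 2\delta_2$, matching the designed distance $2\delta_2$ of $\mathcal C_{(3,3^m-1,2\delta_2,-\delta_2)}$. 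The crux is the coset bookkeeping of the second paragraph, which rests entirely on Theorems~\ref{th1} and~\ref{th2} having classified \emph{all} absolute coset leaders $\geq\delta_2$; once that and $\gcd(\delta_2,n)=1$ are in hand, the remaining steps are routine and parallel those of Theorem~\ref{th4i}, so I do not expect a serious obstacle.
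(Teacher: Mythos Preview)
Your proof is correct and follows essentially the same route as the paper's: identify $h(x)$ as the product of the minimal polynomials of $\alpha^{\delta_2}$ and $\alpha^{-\delta_2}$ via Theorems~\ref{th1} and~\ref{th2}, obtain the trace form (the paper cites Delsarte's theorem explicitly), use $\gcd(\delta_2,n)=1$ to reparametrize by $x=\alpha^{\delta_2 i}$, and invoke the BCH bound. Your LCD verification via $-Z=Z$ is slightly more explicit than the paper's, which simply records that $h(x)=f(x)\widehat f(x)$ is a product of reciprocal pairs, but the content is the same.
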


\begin{proof}   By Theorem \ref{th2}, $\delta_2$ is the second largest absolute coset leader, the parity-check polynomial of $\mathcal C_{(3,3^m-1,2\delta_2,-\delta_2)}$ is  $h(x)=\frac{x^n-1}{g(x)}=f(x)\widehat f(x)$, where $f(x)$ is  irreducible over $\Bbb F_3$, $f(\alpha^{\delta_2})=0$, $\deg(f(x))=m$,  and $\widehat f(x)$ is a reciprocal polynomial of $f(x)$.

Let $\beta=\alpha^{\delta_2}$. Then
by Delsarte's Theorem \cite{D1},
$$\mathcal C_{(3,3^m-1,2\delta_2,-\delta_2)}=\{c(a,b)=(\Tr_{3^m/3}(a\beta^i+b(\beta^{-1})^i))_{i=0}^{n-1}: a,b\in \Bbb F_{3^m}\}.$$

On the other hand, by $m$ odd, $-\frac{3^m+1}{4}\in C_{\delta_2}$ and  $\gcd(4,3^m+1)=1$, we get that $\gcd(\delta_2, 3^m-1)=1$ and $\beta$ is a primitive element of $\Bbb F_{3^m}$.  Hence
$$\mathcal C_{(3,3^m-1,2\delta_2,-\delta_2)}=\{c(a,b)=(\Tr_{3^m/3}(ax+bx^{-1}))_{x\in \Bbb F^*_{3^m}}: a,b\in \Bbb F_{3^m}\}.$$
By Theorem \ref{th1} and BCH bound, it has parameters $[3^m-1,2m, \ge 2\delta_2]$.
\end{proof}

Let $a,b\in \Bbb F_{3^m}$,
 the Kloosterman sum $K_m(a, b)$ is defined over $\Bbb F_{3^m}$ as follows:
 $$K_m(a,b)=\sum_{x\in \Bbb F^*_{3^m}}\chi(ax+bx^{-1}),$$

 where $\chi$ is the canonical additive character of  $\Bbb F_{3^m}$.

\begin{corollary}{\rm Let $m$ be an odd integer. Then for $a,b\in \Bbb F_{3^m}$ and $(a,b)\ne(0,0)$,
$$K_m(a,b)\le \frac{3^m+2\cdot 3^{m-1}-1}{4}.$$}
\end{corollary}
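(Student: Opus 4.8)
The plan is to bound $K_m(a,b)$ by relating it to the minimum Hamming weight of the LCD code $\mathcal{C}_{(3,3^m-1,2\delta_2,-\delta_2)}$ constructed in Theorem \ref{th4iiii}. Recall from that theorem that, for $m$ odd, $\beta=\alpha^{\delta_2}$ is a primitive element of $\Bbb F_{3^m}$, so every codeword has the form $c(a,b)=(\Tr_{3^m/3}(ax+bx^{-1}))_{x\in\Bbb F_{3^m}^*}$ with $a,b\in\Bbb F_{3^m}$, and by the BCH bound (together with Theorem \ref{th1} giving the relevant coset-leader structure) the minimum distance satisfies $d\ge 2\delta_2=\frac{3^{m-1}-1}{2}+2\cdot 3^{m-2}$.

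First I would express the Hamming weight of $c(a,b)$ in terms of the Kloosterman sum. As in the proofs of Theorems \ref{th4i}--\ref{th4iii}, using the orthogonality of additive characters,
$$
W_H(c(a,b))=n-\frac13\sum_{y\in\Bbb F_3}\sum_{x\in\Bbb F_{3^m}^*}\omega^{y\,\Tr_{3^m/3}(ax+bx^{-1})}
=\frac{2n}{3}-\frac13\sum_{y\in\Bbb F_3^*}\sum_{x\in\Bbb F_{3^m}^*}\chi(y(ax+bx^{-1})).
$$
For fixed nonzero $y$, the substitution $x\mapsto x$ shows $\sum_{x}\chi(y a x + y b x^{-1})=K_m(ya,yb)$; and since $\Bbb F_3^*=\{1,-1\}$ and $x\mapsto -x$ is a bijection of $\Bbb F_{3^m}^*$, one gets $K_m(-a,-b)=\sum_x\chi(-ax-bx^{-1})=\sum_x\chi(a(-x)+b(-x)^{-1})=K_m(a,b)$. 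Hence the inner double sum equals $2K_m(a,b)$, giving the clean identity
$$
W_H(c(a,b))=\frac{2n}{3}-\frac23\,K_m(a,b),\qquad\text{equivalently}\qquad K_m(a,b)=n-\tfrac32\,W_H(c(a,b)).
$$

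Next I would invoke the lower bound on the minimum distance. For $(a,b)\ne(0,0)$ the codeword $c(a,b)$ is nonzero (this needs a one-line check: if $c(a,b)=0$ then $ax+bx^{-1}$ lies in the kernel of the trace for all $x$, forcing $ax^2+bx$ to vanish identically as a low-degree polynomial, so $a=b=0$), hence $W_H(c(a,b))\ge d\ge 2\delta_2$. Plugging this into the displayed identity,
$$
K_m(a,b)=n-\tfrac32 W_H(c(a,b))\le n-\tfrac32\cdot 2\delta_2 = (3^m-1)-3\delta_2.
$$
Finally I would simplify: $3\delta_2=3\bigl(\tfrac{3^{m-1}-1}{4}+3^{m-2}\bigr)=\tfrac{3^m-3}{4}+3^{m-1}$, so
$$
K_m(a,b)\le 3^m-1-\tfrac{3^m-3}{4}-3^{m-1}=\tfrac{3\cdot 3^m-4}{4}-3^{m-1}+\tfrac{3}{4}
=\tfrac{3^{m+1}-1}{4}-3^{m-1}=\frac{3^m+2\cdot 3^{m-1}-1}{4},
$$
which is exactly the claimed inequality. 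The main obstacle is not any deep estimate but making sure the bookkeeping is right: one must be careful that $\beta$ really is primitive (so the code is the full Kloosterman-indexed family), that the BCH bound applies with the stated designed distance $2\delta_2$ from Theorem \ref{th4iiii}, and that the $x\mapsto -x$ symmetry is used correctly to collapse the sum over $y\in\Bbb F_3^*$ into a single copy of $K_m(a,b)$ rather than $K_m(a,b)+K_m(-a,-b)$ that only accidentally coincide; an arithmetic slip in simplifying $3\delta_2$ would also be easy to make, so I would double-check the final fraction against the $m=3$ case numerically.
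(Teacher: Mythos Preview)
Your approach is exactly the paper's: express $W_H(c(a,b))=\tfrac{2n}{3}-\tfrac{2}{3}K_m(a,b)$ via character orthogonality, invoke the BCH bound $W_H\ge 2\delta_2$ from Theorem~\ref{th4iiii}, and solve for $K_m(a,b)$. One small correction: your ``one-line check'' that $c(a,b)\ne 0$ for $(a,b)\ne(0,0)$ is not valid as written---$\Tr_{3^m/3}(ax+bx^{-1})=0$ for all $x$ does not force $ax+bx^{-1}$ (or $ax^2+b$) to vanish as a polynomial, since the trace has a large kernel---but the conclusion is already contained in Theorem~\ref{th4iiii}, which proves $\dim\mathcal C=2m$, so the map $(a,b)\mapsto c(a,b)$ from $\Bbb F_{3^m}^2$ is injective.
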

\begin{proof} For $a,b\in\Bbb F_{3^m}$ and $(a,b)\ne (0,0)$,
by Theorem \ref{th4iiii},
\begin{eqnarray*}W_H(c(a,b))&=&n-|\{x\in \Bbb F_{3^m}^*:\Tr_{3^m/3}(ax+bx^{-1})=0 \}|\\
&=&n-\frac 13 \sum_{y\in \Bbb F_3}\sum_{ x\in \Bbb F_{3^m}^*}\chi(y(ax+bx^{-1}))\\
&=&\frac {2n} {3} -\frac {2} {3}K_m(a,b).
\end{eqnarray*}
Hence $\frac {2n} {3} -\frac {2} {3}K_m(a,b)\ge 2(\frac{3^{m-1}-1}4+3^{m-2})$ and $K_m(a,b)\le \frac{3^m+2\cdot 3^{m-1}-1}{4}$.
\end{proof}

%In fact, there is a upper bound of Kloosterman sum as follows.

%\begin{lem} \cite{LN} For  $a,b\in \Bbb F_{2^m}$ and $(a, b)\ne (0,0)$,
% $$|K_m(a,b)|\leq 2\cdot2^{\frac m2}.$$
 %\end{lem}

\begin{remark} {\rm Numerical examples by Magma show that the bound here is not tight in general.}
\end{remark}

\begin{theorem}\label{th45} {\rm Let $m$ be an even integer, $\delta_1=\frac{3^{m}-1}{2}$,  $\delta_2=\frac{3^{m}-1}{4}$,  $Z=\bigcup_{|s|<\delta_2}C_s$, and $g(x)=\prod_{i\in Z}(x-\alpha^i)$.
Then   $$\mathcal C_{(3,3^m-1,2\delta_2,-\delta_2)}=\{c(a,b)=(a\alpha^{\delta_1i}+\Tr_{3^2/3}(b\alpha^{\delta_2i}))_{i=0}^{n-1}: a\in \Bbb F_{3}, b\in \Bbb F_{3^2}\}$$ is a ternary LCD BCH code with  parameters $[3^m-1,3, \frac12\cdot(3^m-1)]$ and the   weight distribution  in Table 3.
\begin{table}[!htbp]
\text{Table 3}\\
\begin{tabular}{c|c}
  \hline
  % after \\: \hline or \cline{col1-col2} \cline{col3-col4} ...
  Weight  & Frequency \\ \hline
  0 & 1 \\ \hline
  $\frac12\cdot(3^m-1)$ & 12 \\ \hline
   $\frac34\cdot(3^m-1)$& 8 \\ \hline
  $3^m-1$& 6 \\ \hline
\end{tabular}
\label{t1}
\end{table}
}
\end{theorem}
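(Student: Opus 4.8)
The plan is to combine the trace representation coming from Delsarte's theorem with a direct character count, after pinning down the parity-check polynomial of $\mathcal C_{(3,3^m-1,2\delta_2,-\delta_2)}$ via Theorems \ref{th1} and \ref{th2}.

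First I would settle the LCD property and the dimension. The defining set $Z=\bigcup_{|s|<\delta_2}C_s$ is invariant under $s\mapsto -s$: if $j\in C_s$ with $|s|<\delta_2$ then $-j\in C_{-s}$ and $|-s|=|s|<\delta_2$, so $-j\in Z$, i.e. $Z=-Z$. Hence $g(x)$ is self-reciprocal and $\mathcal C$ is LCD by the lemma of \cite{YM}. Next, a $3$-cyclotomic coset $C_j$ is disjoint from $Z$ exactly when its absolute coset leader is $\ge\delta_2$; by Theorem \ref{th1}, Theorem \ref{th2}(2) (for $m$ even), and the fact that $n-\delta_1=\delta_1\in C_{\delta_1}$ and $n-\delta_2\in C_{\delta_2}$ (so Proposition \cite{HYWS}(2) produces no further cosets), the only such cosets are $C_{\delta_1}=\{\delta_1\}$ and $C_{\delta_2}=\{\delta_2,n-\delta_2\}$. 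Therefore the parity-check polynomial is $h(x)=\frac{x^n-1}{g(x)}=(x+1)p(x)$, where $x+1$ is the minimal polynomial of $\alpha^{\delta_1}=-1$ (note $\delta_1=n/2$, so $\alpha^{\delta_1}$ has order $2$) and $p(x)$ is the minimal polynomial over $\Bbb F_3$ of $\beta:=\alpha^{\delta_2}$. Since $\gcd(\delta_2,n)=\delta_2$, the element $\beta$ has order $4$, hence lies in $\Bbb F_{3^2}\setminus\Bbb F_3$ and $\deg p=2$; consequently $\dim\mathcal C=\deg h=1+2=3$.

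Now Delsarte's theorem applied to $h(x)=(x+1)p(x)$ yields exactly
$$\mathcal C=\{c(a,b)=(a\alpha^{\delta_1 i}+\Tr_{3^2/3}(b\alpha^{\delta_2 i}))_{i=0}^{n-1}:a\in\Bbb F_3,\ b\in\Bbb F_{3^2}\},$$
since the degree-$1$ factor contributes $\Tr_{3/3}(a\alpha^{\delta_1 i})=a\alpha^{\delta_1 i}$ and the degree-$2$ factor contributes $\Tr_{3^2/3}(b\alpha^{\delta_2 i})$. Observe $\beta^2=\alpha^{2\delta_2}=\alpha^{\delta_1}=-1$, so $\alpha^{\delta_1 i}=(-1)^i=\beta^{2i}$, and as $i$ runs over $\Bbb Z_n$ (with $4\mid n$) the value $\beta^i\in\{1,\beta,-1,-\beta\}$ is attained exactly $n/4=\delta_2$ times for each of the four values.

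Finally I would compute $W_H(c(a,b))$. Grouping the $n$ indices by residue mod $4$, the quantity $a(-1)^i+\Tr_{3^2/3}(b\beta^i)$ takes the four values $a+u,\ -a+v,\ a-u,\ -a-v$, each on $n/4$ indices, where $u:=\Tr_{3^2/3}(b)$ and $v:=\Tr_{3^2/3}(b\beta)$. Because $1,\beta$ is an $\Bbb F_3$-basis of $\Bbb F_{3^2}$ and the trace form is nondegenerate, $b\mapsto(u,v)$ is a bijection $\Bbb F_{3^2}\to\Bbb F_3^2$, so $(a,u,v)$ runs over all of $\Bbb F_3^3$. A short case analysis on whether $a=0$ and on which of $u,v$ vanish shows the number of zero coordinates of $c(a,b)$ equals $\tfrac n4\,S$ with $S\in\{0,1,2,4\}$, whence $W_H(c(a,b))=\tfrac n4(4-S)\in\{0,\tfrac n2,\tfrac{3n}4,n\}$; counting the triples $(a,u,v)$ in each case gives the frequencies $1,12,8,6$ of Table 3, and in particular the minimum distance $\tfrac12(3^m-1)=2\delta_2$, which is the designed distance. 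The only genuinely delicate point is the first step — correctly identifying $h(x)$ and verifying that $\beta=\alpha^{\delta_2}$ has order $4$, so that $\beta\in\Bbb F_{3^2}$ and the trace appearing in the statement is the one from $\Bbb F_{3^2}$ (not from $\Bbb F_{3^m}$); the residue-class bookkeeping and the bijection $b\mapsto(u,v)$ are then routine.
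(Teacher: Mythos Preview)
Your argument is correct, and the weight computation takes a genuinely different route from the paper's. Both proofs identify the parity-check polynomial $h(x)=(x+1)(x^2+1)$ via Theorems \ref{th1} and \ref{th2}(2) and invoke Delsarte to obtain the trace form; from there the paths diverge. The paper rewrites $W_H(c(a,b))$ as a character sum, uses the $4$-periodicity to reduce the inner sum to $\sum_{x\in\Bbb F_9^*}\chi(byx^2)$, and then appeals to the quadratic Gauss sum over $\Bbb F_9$ (Lemma 2.3), splitting into the cases $b$ square / nonsquare in $\Bbb F_9^*$. You instead exploit the $4$-periodicity directly: the codeword takes only the four values $a+u,\,-a+v,\,a-u,\,-a-v$ (each on $n/4$ coordinates), and the nondegeneracy of the trace form makes $b\mapsto(u,v)$ a bijection $\Bbb F_9\to\Bbb F_3^2$, so a finite case check on $(a,u,v)\in\Bbb F_3^3$ gives the whole weight distribution. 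Your approach is more elementary and self-contained (no Gauss sums needed), and it makes the LCD verification explicit via $Z=-Z$; the paper's approach is more uniform with the other theorems in the section, where the Gauss-sum machinery is already in play. One small addition you might make explicit: the possibility $S=3$ is ruled out because for $a=0$ the pair $(a+u,a-u)$ vanishes together or not at all (so $S$ is even), while for $a\ne0$ each of the two pairs contributes at most one zero, forcing $S\le2$.
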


\begin{proof}  By Theorem \ref{th2}, $\delta_2$ is the second largest abstract coset leader and the parity-check polynomial of $\mathcal C_{(3,3^m-1,2\delta_2,-\delta_2)}$ is  $h(x)=\frac{x^n-1}{g(x)}=f_1(x) f_2(x)$, where $f_1(x)$ and $f_2(x)$ are  irreducible over $\Bbb F_3$, $f_1(x)=x+1$,     $f_1(\delta_1)=0$, $f_2(x)=x^2+1$,  and $f_2(\delta_2)=0$.

Let $\zeta_4=\alpha^{\delta_2}\in \Bbb F_{3^2}$ be a $4$-th primitive root of unit and $\alpha^{\delta_1}=-1$.
Then
by Delsarte's Theorem \cite{D1},
$$\mathcal C_{(3,3^m-1,-\delta_2,2\delta_2)}=\{c(a)=(a(-1)^i)+\Tr_{3^2/3}(b\zeta_4^i))_{i=0}^{n-1}: a\in \Bbb F_{3}, b\in\Bbb F_{3^2}\}.$$

Let $\omega$ be a 3-th primitive root of unit in the complex field. By $m\equiv 0\pmod {4}$, $8|3^m-1$ and $\Bbb F_3^{*}\subset (\Bbb F_{3^2}^*)^2$. Denote  $Z(c(a,b))=|i\in \{0,1,\ldots,n-1\}:a(-1)^i+\Tr_{3^2/3}(b\zeta_4^i)=0|.$ Then
\begin{eqnarray*}
&&W_H(c(a,b))=n-Z(c(a,b))\\
&=&
n-\frac{1}{3}\sum_{y\in \Bbb F_3}\sum_{i=0}^{n-1}\omega^{y( a(-1)^i)+\Tr_{3^2/3}(b\zeta_4^i))}\\
&=&\frac{2n}3-\frac{n}{12}\sum_{y\in \Bbb F_3^*}\omega^{ay}\sum_{i=0}^{3}\omega^{\Tr_{3^2/3}(by(-\zeta_4)^i))}\\
&=&\frac{2n}{3}-\frac{n}{24}\sum_{y\in \Bbb F_3^*}\omega^{ay}\sum_{x\in \Bbb F_{3^2}^*}\omega^{\Tr_{3^2/3}(bx^2)}.
%&=&\frac{2n}{3}-\frac{3^m-1}{24}\sum_{y\in \Bbb F_3^*}\sum_{x\in \Bbb F_{3^2}^*}\chi(y(a(-1)^i+b\zeta_4^i+b^3\zeta_4^{3i})).
\end{eqnarray*}
Note that
 $(\Bbb F_{3^2}^*)^2=\langle \zeta_4\rangle$ and $\Bbb F_3^*\subset (\Bbb F_{3^2}^*)^2$.

Suppose that  $a=0$ and $b=0$. Then $W_H(c(a,b))=0$.

Suppose that  $a=0$ and $b\neq 0$.  Then by Lemma 2.3,
\begin{eqnarray*} W_H(c(a,b))&=&\frac{2n}{3}-\frac{n}{12}(\sum_{x\in\Bbb F_9}\omega^{y(\Tr_{3^2/3}(bx^2))}-1)=\frac {2n}3-\frac{n}{12}(\eta'(b)G(\eta')-1)\\
&=& \frac{3n}4-\frac{n}{12}\eta'(b)G(\eta')\\
&=&\left\{\begin{array}{ll}\frac{n}2, &\mbox{ if }b \mbox{ is a square, }\\ n, &\mbox{ if }b  \mbox{ is not a square. }\end{array}\right.\end{eqnarray*}
where $\eta'$ is a multiplicative character of order $2$ in $\Bbb F_9$.

Suppose that $a\ne 0$ and $b=0$. Then
\begin{eqnarray*} W_H(c(a,b))=\frac{2n}{3}-\frac{n}{24}\sum_{y\in\Bbb F_3^*}\omega^{ay}(3^2-1)=
n. \end{eqnarray*}

Suppose that $a\ne 0$ and $b\ne 0$. By Lemma 2.3,
\begin{eqnarray*} W_H(c(a,b))&=&\frac{2n}{3}-\frac{n}{24}(-1)(\sum_{x\in\Bbb F_9}\omega^{y(\Tr_{3^2/3}(bx^2))}-1)
\\ &=&\frac {2n}3+\frac{n}{24}(\eta'(b)G(\eta')-1)= \frac{3n}4+\frac{n}{24}\eta'(b)G(\eta')\\
&=&\left\{\begin{array}{ll}\frac{3n}4, &\mbox{ if }b \mbox{ is a square,  }\\ \frac n2, &\mbox{ if }b  \mbox{ is not a square. }\end{array}\right.\end{eqnarray*}

Note that it is easy to obtain their frequencies and this completes the proof.
\end{proof}

\begin{example}{\rm
Let $p=3$, $m=4$, and $n=p^m-1=81$. Then the LCD BCH code in Theorem \ref{th45} has weight enumerator $1+12z^{40}+8z^{60}+6z^{80},$
which is confirmed by Magma.}
\end{example}

%\begin{thm}Let $m\equiv 0 \pmod 4$ and $n=3^m-1$. $\Tr_i$ is the trace function from $\Bbb F_{3^i}$ to $\Bbb F_3$, $i=2,4$.
%Then   $$C_{(3,3^m-1,-\delta_3,2\delta_3)}=\{c(a,b,c)=(a\alpha^{\delta_1i}+\Tr_2(b\alpha^{\delta_2i})+\Tr_4(c\alpha^{\delta_3i}))_{i=0}^{n-1}: a\in \Bbb F_{3}, b\in \Bbb F_{3^2}, c\in \Bbb F_{3^4}\}$$ is a ternary LCD BCH code with  parameters $[3^m-1,7, \ge 2\delta_3]$, where $\delta_3=\frac{3^{m}-1}{5}$ and $2\delta_3$ is its designed distance.
%\end{thm}

\begin{theorem} {\rm  Let $m\equiv 2 \pmod 4$, $\delta_1=\frac{3^m-1}{2}$, $\delta_2=\frac{3^m-1}{4}$ $\delta_3=\frac{3^{m-6}-1}{5}+3^{m-6}+2\cdot3^{m-5}+2\cdot3^{m-3}+3^{m-2}$, $Z=(\bigcup_{|s|<\delta_3}C_s)\bigcup C_{\delta_1}\bigcup C_{\delta_2}$,
$g(x)=\prod_{i\in Z}(x-\alpha^i)$.
Then   $$\mathcal C_{(3,3^m-1,2\delta_3,-\delta_3)}
=\{c(a,b)=(\Tr_{3^m/3}(ax^2+bx^{-2}))_{x\in \Bbb F_{3^m}^*}: a,b\in \Bbb F_{3^m}\}$$ is a ternary LCD BCH code with  parameters $[3^m-1,2m, \ge 2\delta_3]$ and its designed distance $2\delta_3$. }
\end{theorem}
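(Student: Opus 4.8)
The plan is to follow the template of the proof of Theorem \ref{th4iiii}: the one new feature is that $\gcd(\delta_3,3^m-1)=2$ rather than $1$, which forces the quadratic twist $x\mapsto x^2$ in the trace representation, together with the fact that the two extra cosets $C_{\delta_1},C_{\delta_2}$ have to be thrown into the defining set so that the parity-check polynomial factors into just two irreducible pieces.

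\textbf{Step 1: dimension, parity-check polynomial, LCD property.} First I would pin down $\Bbb Z_n\setminus Z$. By Theorems \ref{th1}, \ref{th2}(2) and \ref{th3}(2), for $m\equiv2\pmod 4$ the absolute coset leaders $\ge\delta_3$ are exactly $\delta_1,\delta_2,\delta_3$, with $C_{\delta_1}=\{\delta_1\}$, $C_{\delta_2}=\{\delta_2,n-\delta_2\}$, and $C_{\delta_3}\ne C_{n-\delta_3}$ two $3$-cyclotomic cosets of size $m$. Since $\bigcup_{|s|<\delta_3}C_s$ is precisely the union of those cosets whose absolute coset leader is $<\delta_3$, the set $Z=(\bigcup_{|s|<\delta_3}C_s)\cup C_{\delta_1}\cup C_{\delta_2}$ has complement $\Bbb Z_n\setminus Z=C_{\delta_3}\cup C_{n-\delta_3}$, of cardinality $2m$. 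Hence $h(x)=(x^n-1)/g(x)=f(x)\widehat f(x)$, with $f(x)$ the minimal polynomial of $\alpha^{\delta_3}$ over $\Bbb F_3$ (degree $m$) and $\widehat f(x)$ its reciprocal, so $\dim\mathcal C=\deg h=2m$. Moreover $C_{\delta_3}\cup C_{n-\delta_3}=C_{\delta_3}\cup C_{-\delta_3}$ is stable under $i\mapsto-i$, hence so is $Z$, so the root set $\{\alpha^i:i\in Z\}$ of $g(x)$ is closed under $x\mapsto x^{-1}$; thus $g(x)$ is self-reciprocal and $\mathcal C$ is LCD by Lemma 2.1.

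\textbf{Step 2: trace representation with a quadratic twist.} By Delsarte's Theorem \cite{D1}, with one trace term per irreducible factor of $h$,
$$\mathcal C=\{(\Tr_{3^m/3}(a\alpha^{\delta_3i}+b\alpha^{-\delta_3i}))_{i=0}^{n-1}:a,b\in\Bbb F_{3^m}\}.$$
As observed in the proof of Theorem \ref{th4ii}, $\frac{3^m-19}{5}\in C_{n-\delta_3}$ and $\gcd(\frac{3^m-19}{5},3^m-1)=2$; since all members of a $3$-cyclotomic coset have the same gcd with $n$ (as $\gcd(3,n)=1$) and that gcd is invariant under negation, $\gcd(\delta_3,n)=2$. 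Because $m$ is even, $4\mid 3^m-1$, so writing $\delta_3=2u$ forces $u$ odd with $\gcd(u,n)=1$, and therefore $\alpha^u$ is a primitive element of $\Bbb F_{3^m}$. Reindexing by the bijection $x=\alpha^{ui}$ of $\{0,1,\ldots,n-1\}$ onto $\Bbb F_{3^m}^*$ and using $\alpha^{\delta_3i}=(\alpha^{ui})^2=x^2$ and $\alpha^{-\delta_3i}=x^{-2}$, one gets
$$\mathcal C=\{c(a,b)=(\Tr_{3^m/3}(ax^2+bx^{-2}))_{x\in\Bbb F_{3^m}^*}:a,b\in\Bbb F_{3^m}\}.$$

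\textbf{Step 3: designed distance.} Since $Z\supseteq\bigcup_{|s|<\delta_3}C_s\supseteq\{-(\delta_3-1),\ldots,-1,0,1,\ldots,\delta_3-1\}$, a run of $2\delta_3-1$ consecutive integers modulo $n$, the BCH bound gives $d(\mathcal C)\ge2\delta_3$; combined with Step 1 this yields the parameters $[3^m-1,2m,\ge2\delta_3]$ with designed distance $2\delta_3$. The only places I expect to need care are (i) identifying $\Bbb Z_n\setminus Z$ exactly, which rests on having the complete list of absolute coset leaders $\ge\delta_3$ from Theorems \ref{th1}--\ref{th3}, and (ii) the $2$-adic bookkeeping showing that $\gcd(\delta_3,n)=2$ and $4\mid n$ force $\delta_3=2u$ with $u$ a unit modulo $n$, which is exactly what legitimizes the substitution $x\mapsto x^2$; the remaining details run parallel to the proof of Theorem \ref{th4iiii}.
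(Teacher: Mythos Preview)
Your proposal is correct and follows essentially the same route as the paper: identify $h(x)=f(x)\widehat f(x)$ via Theorem \ref{th3}(2), apply Delsarte to obtain the trace form in $\alpha^{\pm\delta_3 i}$, use $\gcd(\delta_3,n)=2$ (through the element $\frac{3^m-19}{5}$ of the coset) to pass to the $x^{\pm2}$ description, and conclude with the BCH bound. Your write-up is in fact more explicit than the paper's in two places---you spell out why $g$ is self-reciprocal (the paper asserts LCD without re-justifying it here) and you make the reindexing bijection $i\mapsto\alpha^{ui}$ concrete rather than simply declaring $\beta$ ``semi-primitive''---but these are expository refinements, not a different argument.
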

\begin{proof}   By Theorem \ref{th3}, $\delta_3$ is the third largest abstract coset leader, the parity-check polynomial of $\mathcal C_{(3,3^m-1,-\delta_3,2\delta_3)}$ is  $h(x)=\frac{x^n-1}{g(x)}=f(x)\widehat f(x)$, where $f(x)$ is  irreducible over $\Bbb F_3$, $f(\alpha^{\delta_3})=0$, $\deg(f(x))=m$,  and $\widehat f(x)$ is a reciprocal polynomial of $f(x)$.

Let $\beta=\alpha^{\delta_3}$. Then
by Delsarte's Theorem \cite{D1},
$$\mathcal C_{(3,3^m-1,2\delta_3,-\delta_3)}=\{c(a,b)=(\Tr_{3^m/3}(a\beta^i+b(\beta^{-1})^i))_{i=0}^{n-1}: a,b\in \Bbb F_{3^m}\}.$$

On the other hand, by $m\equiv 2 \pmod 4$ , $\frac{3^m-19}{5}\in C_{\delta_3}$ ,   $\gcd(5,3^m-1)=1$, and $\gcd(3^m-19, 3^m-1)=\gcd(18, 3^m-1)=2$, we get that $\gcd(\delta_3, 3^m-1)=2$ and $\beta$ is a semi-primitive element of $\Bbb F_{3^m}$.  Hence
$$\mathcal C_{(3,3^m-1,2\delta_3,-\delta_3)}=\{c(a,b)=(\Tr_{3^m/3}(ax^2+bx^{-2}))_{x\in \Bbb F^*_{3^m}}: a,b\in \Bbb F_{3^m}\}.$$
By Theorem \ref{th1} and BCH bound, the code has parameters $[3^m-1,2m, \ge 2\delta_3]$.
\end{proof}

\section{Concluding remarks}

In this paper, several classes of ternary primitive BCH codes and  LCD BCH codes were studied according to the first, second and third largest absolute coset leaders.  The weight distributions of these codes were given except two of them, whose weight distributions rely on the  calculation of  Kloosterman sums.

\section*{Acknowledgments}
The authors are very grateful to the reviewers and the Editor for their valuable  suggestions that  improved the quality of this paper.

\medskip

Received March 2021; 1st revision ; final revision .
\medskip

{\it E-mail address:} xinmeihuang@hotmail.com\\
\indent{\it E-mail address:} yueqin@nuaa.edu.cn\\
\indent{\it E-mail address:} yanshengwu@njupt.edu.cn\\
\indent{\it E-mail address:} xpshi@njfu.edu.cn

\end{document}